\newcommand{\argmin}{\mathop{\rm argmin}\limits}
\newcommand{\dom}{\mathop{\rm dom}}
\newcommand{\prox}{\mathrm{prox}}
\newcommand{\sign}{\mathrm{sign}}
\newcommand{\innerprod}[2]{\langle #1,#2 \rangle}
\newtheorem{theorem}{Theorem}[section]
\newtheorem{corollary}{Corollary}[section]
\newtheorem{lemma}{Lemma}[section]
\newtheorem{proposition}{Proposition}[section]
\title{Fast algorithm for sparse least trimmed squares via trimmed-regularized reformulation}
\author{Shotaro Yagishita\thanks{The Institute of Statistical Mathematics, Japan, E-mail: syagi@ism.ac.jp}}
\date{\today}
\begin{document}

\maketitle

\begin{abstract}
The least trimmed squares (LTS) is a reasonable formulation of robust regression whereas it suffers from high computational cost due to the nonconvexity and nonsmoothness of its objective function.
The most frequently used FAST-LTS algorithm is particularly slow when a sparsity-inducing penalty such as the $\ell_1$ norm is added.
This paper proposes a computationally inexpensive algorithm for the sparse LTS, which is based on the proximal gradient method with a reformulation technique.
Proposed method is equipped with theoretical convergence preferred over existing methods.
Numerical experiments show that our method efficiently yields small objective value.
\end{abstract}

\section{Introduction}\label{sec:intro}
Outliers sometimes appear in data sets.
Nowadays, there exist several ways to remove the effect of outliers.
For regression problems, methods based on robust loss functions such as the Huber loss \citep{huber1964robust} and the Tukey loss \citep{beaton1974fitting} are widely used.
\citet{rousseeuw1984least} proposed the least trimmed squares (LTS) estimation, which minimizes the smallest partial sum of squares of the residuals instead of the sum of squares of the residuals and is equipped with high breakdown point.
Let $y=(y_1,\ldots,y_n)$ be the vector of responses and $X=(x_1,\ldots,x_n)^\top$ be the matrix of $d$-dimensional covariates $x_1,\ldots,x_n$.
The LTS for linear regression model defined as
\begin{equation}\label{problem:LTS}
    \underset{\beta_0\in\mathbb{R},\beta\in\mathbb{R}^d}{\mbox{minimize}} \quad \frac{1}{4}T_h(y-\beta_0\bm{1}-X\beta),
\end{equation}
where $1\le h\le n$, $\bm{1}$ is the all-ones vector,
\begin{equation}
    T_h(r)\coloneqq r_{\langle1\rangle}^2+\cdots+r_{\langle h\rangle}^2=\min_{\substack{\Lambda\subset[n]\\|\Lambda|=h}}\sum_{i\in\Lambda}r_i^2,
\end{equation}
and $|r_{\langle 1\rangle}|\le|r_{\langle 2\rangle}|\le\cdots\le|r_{\langle n\rangle}|$.
The function $T_h$ is called a trimmed squares function.
The loss function of the LTS is a reasonable robust one of ignoring $n-h$ worst-fitting samples.
To tackle high-dimensional data, \citet{alfons2013sparse} proposed the combination of the LTS and the least absolute shrinkage and selection operator (LASSO) \citep{tibshirani1996regression} called the sparse LTS (SLTS):
\begin{equation}\label{problem:SLTS}
    \underset{\beta_0\in\mathbb{R},\beta\in\mathbb{R}^d}{\mbox{minimize}} \quad \frac{1}{4}T_h(y-\beta_0\bm{1}-X\beta) + \lambda\|\beta\|_1,
\end{equation}
where $\lambda>0$ and $\|\cdot\|_1$ is the $\ell_1$ norm.

The LTS and SLTS are a reasonable formulation of robust regression whereas it suffers from high computational cost due to the nonconvexity and nonsmoothness of its objective function.
The FAST-LTS algorithm was introduced by \citet{rousseeuw2006computing} to solve the LTS.
Based on the fact that the LTS is equivalently rewritten as
\begin{equation}
    \underset{\beta_0\in\mathbb{R},\beta\in\mathbb{R}^d,|\Lambda|=h}{\mbox{minimize}} \quad \frac{1}{4}\sum_{i\in\Lambda}(y_i-\beta_0-x_i^\top\beta)^2
\end{equation}
the FAST-LTS performs an alternating minimization with respect to $(\beta_0,\beta)$ and $\Lambda$.
In the update of $\Lambda$, the $h$ samples that best fit the current regression model are selected, and in the update of $(\beta_0,\beta)$, the ordinary least squares is applied to the samples in the subset $\Lambda$.
For the SLTS, \citet{alfons2013sparse} have developed an algorithm analogous to FAST-LTS, which we call a FAST-SLTS. 
Unlike the FAST-LTS, the update of $(\beta_0,\beta)$ in the FAST-SLTS requires solving the LASSO problem, resulting in the use of algorithms such as the proximal gradient method (PGM, see e.g. \citet{parikh2014proximal}), also called the iterative shrinkage thresholding algorithm (ISTA), at each iteration.
In other words, because of its double-loop structure, the FAST-SLTS is computationally expensive.
Recently, \citet{aravkin2020trimmed} have proposed the trimmed stochastic variance reduced gradient method (TSVRG).
Although the TSVRG is a single-loop algorithm, the stepsizes used in it is restricted to very small, so convergence is slow in practice.

This paper proposes a computationally inexpensive single-loop algorithm for the SLTS.
The SLTS \eqref{problem:SLTS} is first equivalently rewritten in the form of a regularized regression problem.
While the regularized regression problem is similar to that of \citet{she2011outlier}, our regularizer is not separable with respect to the variables.
The PGM can be efficiently applied to the regularized problem and, combined with a linesearch technique, avoids excessively small stepsizes such as the TSVRG.
Moreover, owing to favorable properties of the regularized problem, we can show the linear convergence to a local minimizer, which is a stronger result than the convergence of the PGM for general nonconvex nonsmooth optimization problems.
Unlike the TSVRG, our reformulation technique can be easily extended to nonlinear regression.

The rest of this paper is organized as follows.
The remainder of this section is devoted to notation and preliminary results.
The next section introduces a regularized reformulation for the SLTS.
In Section \ref{sec:PGM}, we presents the PGM for the SLTS and its convergence analysis.
An extension to nonlinear regression is also provided.
Numerical experiments to demonstrate the efficiency of our proposal are reported in Section \ref{sec:experiments}.
Finally, Section \ref{sec:conclusion} concludes the paper.

\subsection{Notation and Preliminaries}
For an integer $n$, the set $[n]$ is defined by $[n]\coloneqq\{1,\ldots,n\}$.
The standard inner product and $\ell_2$ norm are denoted by $\innerprod{\cdot}{\cdot}$ and $\|\cdot\|_2$, respectively.
For a $(-\infty,\infty]$-valued function $\phi$, the Moreau envelope and proximal mapping at $r$ are defined by
\begin{align}
    M_{\phi}(r) &\coloneqq \inf_{\alpha}\Big\{\phi(\alpha)+\frac{1}{2}\|\alpha-r\|_2^2\Big\},\\
    \prox_{\phi}(r) &\coloneqq \argmin_{\alpha}\Big\{\phi(\alpha)+\frac{1}{2}\|\alpha-r\|_2^2\Big\},
\end{align}
respectively.
If the set $\prox_{\phi}(r)$ is a singleton, we denote its unique element by $\prox_{\phi}(r)$ for the convention.
It is well known that the proximal mapping of $c\|\cdot\|_1$ with $c>0$ is given by
\begin{equation}
    \big(\prox_{c\|\cdot\|_1}(r)\big)_i = \mathcal{S}_c(r_i) \coloneqq \sign(r_i)(|r_i|-c)_+,
\end{equation}
where $(\xi)_+\coloneqq\max\{\xi,0\}$ and $\sign(\xi)$ is $1$ if $\xi\ge0$ and $-1$ otherwise \citep[Example 6.8]{beck2017first}.

\section{Trimmed-regularized reformulation for SLTS}\label{sec:trimmed-reformulation}
To provide efficient algorithm for the SLTS, we first propose an equivalent reformulation of the SLTS.
The following proposition provides the Moreau envelope and proximal mapping of the trimmed squares function $T_h$, which plays a key role not only in the reformulation of the SLTS but also in constructing an efficient algorithm.

\begin{proposition}\label{prop:prox-trimmed-squares}
Let $\gamma>0,~ r\in\mathbb{R}^n$.
We define $\mathcal{I}$ as the set consisting of index sets $I$ of size $h$ such that $|r_i|\le|r_j|$ for any $i\in I,~ j\notin I$.
The set $\prox_{\gamma T_h}(r)$ consists of a vector $r^*$ such that
\begin{align}
    r^*_i=
    \begin{cases}
        \frac{r_i}{2\gamma+1}, &i\in I,\\
        r_i, &i\notin I
    \end{cases}
\end{align}
for some $I\in\mathcal{I}$.
Additionally, it holds that
\begin{equation}
    M_{\gamma T_h}(r)=\min_{\alpha\in\mathbb{R}^n}\Big\{\gamma T_h(\alpha)+\frac{1}{2}\|\alpha-r\|_2^2\Big\}=\frac{\gamma}{2\gamma+1}T_h(r).
\end{equation}
\end{proposition}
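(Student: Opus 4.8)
The plan is to use the combinatorial description $T_h(\alpha)=\min_{|\Lambda|=h}\sum_{i\in\Lambda}\alpha_i^2$ and to interchange the two minimizations. Put $F(\alpha)\coloneqq\gamma T_h(\alpha)+\frac12\|\alpha-r\|_2^2$. Since $T_h$ is continuous and nonnegative, $F$ is continuous and coercive, so its minimum is attained, and writing $F(\alpha)$ itself as the pointwise minimum over $\Lambda$ gives
\[
\min_{\alpha}F(\alpha)=\min_{|\Lambda|=h}\ \min_{\alpha\in\mathbb{R}^n}\Big\{\gamma\sum_{i\in\Lambda}\alpha_i^2+\tfrac12\|\alpha-r\|_2^2\Big\},
\]
the interchange of the (finite) minimum over $\Lambda$ with the minimum over $\alpha$ being trivial.

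Next I would solve the inner problem for fixed $\Lambda$, which decouples over coordinates: for $i\notin\Lambda$ the contribution $\frac12(\alpha_i-r_i)^2$ is minimized at $\alpha_i=r_i$, while for $i\in\Lambda$ the strongly convex quadratic $\gamma\alpha_i^2+\frac12(\alpha_i-r_i)^2$ has unique minimizer $\alpha_i=r_i/(2\gamma+1)$ and optimal value $\frac{\gamma}{2\gamma+1}r_i^2$. Hence the inner problem has the unique minimizer $\alpha^{\Lambda}$ given by $\alpha^{\Lambda}_i=r_i/(2\gamma+1)$ on $\Lambda$ and $\alpha^{\Lambda}_i=r_i$ off $\Lambda$, with optimal value $\frac{\gamma}{2\gamma+1}\sum_{i\in\Lambda}r_i^2$. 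Minimizing over $\Lambda$ and using $\frac{\gamma}{2\gamma+1}>0$ yields $\min_{\alpha}F(\alpha)=\frac{\gamma}{2\gamma+1}\min_{|\Lambda|=h}\sum_{i\in\Lambda}r_i^2=\frac{\gamma}{2\gamma+1}T_h(r)$, which is the Moreau-envelope identity; moreover the minimizing index sets are exactly those attaining $\min_{|\Lambda|=h}\sum_{i\in\Lambda}r_i^2$, which a short swapping argument shows to be precisely the members of $\mathcal{I}$.

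For $\prox_{\gamma T_h}(r)$ I would prove both inclusions. If $r^*=\alpha^{I}$ for some $I\in\mathcal{I}$, then $F(r^*)\le\gamma\sum_{i\in I}(r^*_i)^2+\frac12\|r^*-r\|_2^2=\frac{\gamma}{2\gamma+1}\sum_{i\in I}r_i^2=\frac{\gamma}{2\gamma+1}T_h(r)=\min_{\alpha}F(\alpha)$, so $r^*$ is a minimizer. Conversely, let $\alpha^*$ minimize $F$ and choose $\Lambda^*$ with $T_h(\alpha^*)=\sum_{i\in\Lambda^*}(\alpha^*_i)^2$. Then
\[
\tfrac{\gamma}{2\gamma+1}T_h(r)=F(\alpha^*)=\gamma\sum_{i\in\Lambda^*}(\alpha^*_i)^2+\tfrac12\|\alpha^*-r\|_2^2\ \ge\ \tfrac{\gamma}{2\gamma+1}\sum_{i\in\Lambda^*}r_i^2\ \ge\ \tfrac{\gamma}{2\gamma+1}T_h(r),
\]
forcing both inequalities to be equalities. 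Equality on the left identifies $\alpha^*$ with the unique inner minimizer $\alpha^{\Lambda^*}$, and equality on the right gives $\sum_{i\in\Lambda^*}r_i^2=T_h(r)$, i.e. $\Lambda^*\in\mathcal{I}$; hence $\alpha^*$ has the stated form. I expect the main obstacle to be exactly this equality analysis -- reading off the optimal vector and the optimal support simultaneously from the chain of inequalities, and keeping track of the fact that ties in $|r_i|$ admit several $I\in\mathcal{I}$ and hence possibly several elements of $\prox_{\gamma T_h}(r)$ -- but the strong convexity of each inner subproblem keeps it clean.
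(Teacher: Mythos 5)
Your proposal is correct and follows essentially the same route as the paper: write $F(\alpha)=\gamma T_h(\alpha)+\frac12\|\alpha-r\|_2^2$ as a pointwise minimum over index sets $\Lambda$, swap the two minimizations, solve the decoupled coordinate-wise quadratics to get $\alpha_i=r_i/(2\gamma+1)$ on $\Lambda$ and $\alpha_i=r_i$ off $\Lambda$, and then identify the optimal $\Lambda$ with the members of $\mathcal{I}$. Your treatment of the converse inclusion for $\prox_{\gamma T_h}(r)$ (the equality analysis showing every minimizer arises this way) is in fact more explicit than what the paper writes down, and is a worthwhile addition rather than a deviation.
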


\begin{proof}
The minimization problem in $\prox_{\gamma T_h}(r)$ can be equivalently rewritten as
\begin{align}
    &\min_{\alpha\in\mathbb{R}^n}\bigg\{\gamma\min_{\substack{\Lambda\subset[n]\\|\Lambda|=h}}\sum_{i\in\Lambda}\alpha_i^2+\frac{1}{2}\sum_{i\in[n]}(\alpha_i-r_i)^2\bigg\}\\
    &=\min_{\substack{\Lambda\subset[n]\\|\Lambda|=h}}\min_{\alpha\in\mathbb{R}^n}\bigg\{\gamma\sum_{i\in\Lambda}\alpha_i^2+\frac{1}{2}\sum_{i\in[n]}(\alpha_i-r_i)^2\bigg\}\\
    &=\min_{\substack{\Lambda\subset[n]\\|\Lambda|=h}}\bigg\{\sum_{i\in\Lambda}\min_{\alpha_i\in\mathbb{R}}\left\{\gamma\alpha_i^2+\frac{1}{2}(\alpha_i-r_i)^2\right\}+\sum_{i\notin\Lambda}\min_{\alpha_i\in\mathbb{R}}\left\{\frac{1}{2}(\alpha_i-r_i)^2\right\}\bigg\}\\
    &=\min_{\substack{\Lambda\subset[n]\\|\Lambda|=h}}\bigg\{\sum_{i\in\Lambda}\frac{\gamma}{2\gamma+1}r_i^2\bigg\},
\end{align}
where the minimum value with respect to $\alpha$ in the last equality is attained at $\alpha_i=\frac{r_i}{2\gamma+1}$ for $i\in\Lambda$ and $\alpha_i=r_i$ for $i\notin\Lambda$.
Any index set $I\in\mathcal{I}$ minimize the last minimization problem, that is,
\begin{equation}
    \sum_{i\in I}\frac{\gamma}{2\gamma+1}r_i^2=\min_{\substack{\Lambda\subset[n]\\|\Lambda|=h}}\bigg\{\sum_{i\in\Lambda}\frac{\gamma}{2\gamma+1}r_i^2\bigg\}=\frac{\gamma}{2\gamma+1}T_h(r).
\end{equation}
This completes the proof.
\end{proof}

Proposition \ref{prop:prox-trimmed-squares} with $r=y-\beta_0\bm{1}-X\beta$ yields the equivalence of \eqref{problem:SLTS} and the following sparse trimmed-regularized least squares (STRLS):
\begin{equation}\label{problem:STRLS}
    \underset{\beta_0\in\mathbb{R},\beta\in\mathbb{R}^d,\alpha\in\mathbb{R}^n}{\mbox{minimize}} \quad \frac{1}{2}\|y-\beta_0\bm{1}-X\beta-\alpha\|_2^2 + \frac{1}{2}T_h(\alpha) + \lambda\|\beta\|_1 \eqqcolon L(\beta_0,\beta,\alpha)
\end{equation}

\begin{corollary}\label{cor:equivalence}
The following assertions holds.
\begin{enumerate}[(i)]
    \item Let $(\beta_0^*,\beta^*)$ be an optimal solution of \eqref{problem:SLTS}.
    Then, $(\beta_0^*,\beta^*,\prox_{\frac{1}{2}T_h}(y-\beta_0^*\bm{1}-X\beta^*))$ is optimal to \eqref{problem:STRLS}.
    \item Let $(\beta_0^*,\beta^*,\alpha^*)$ be an optimal solution of \eqref{problem:STRLS}.
    Then, $(\beta_0^*,\beta^*)$ is optimal to \eqref{problem:SLTS}.
\end{enumerate}
\end{corollary}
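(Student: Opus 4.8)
The plan is to derive the corollary from Proposition~\ref{prop:prox-trimmed-squares} by partial minimization in the variable $\alpha$. Taking $\gamma=\frac12$ gives $\frac{\gamma}{2\gamma+1}=\frac14$, so for every $(\beta_0,\beta)$, writing $r\coloneqq y-\beta_0\bm{1}-X\beta$, Proposition~\ref{prop:prox-trimmed-squares} yields
\begin{equation}
\min_{\alpha\in\mathbb{R}^n}\Big\{\frac12\|y-\beta_0\bm{1}-X\beta-\alpha\|_2^2+\frac12 T_h(\alpha)\Big\}=M_{\frac12 T_h}(r)=\frac14 T_h(r),
\end{equation}
and this minimum is attained at $\alpha=\prox_{\frac12 T_h}(r)$, which is nonempty by the same proposition. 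Denoting by $f(\beta_0,\beta)\coloneqq\frac14 T_h(y-\beta_0\bm{1}-X\beta)+\lambda\|\beta\|_1$ the objective of \eqref{problem:SLTS}, this is exactly the identity
\begin{equation}
\min_{\alpha\in\mathbb{R}^n}L(\beta_0,\beta,\alpha)=f(\beta_0,\beta)\quad\text{for all }(\beta_0,\beta),
\end{equation}
with the inner minimum attained at $\alpha=\prox_{\frac12 T_h}(y-\beta_0\bm{1}-X\beta)$. Everything else is the standard equivalence between joint minimization of $L$ and minimization of $f$.

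For (i), let $(\beta_0^*,\beta^*)$ be optimal for \eqref{problem:SLTS} and set $\alpha^*\coloneqq\prox_{\frac12 T_h}(y-\beta_0^*\bm{1}-X\beta^*)$, so that $L(\beta_0^*,\beta^*,\alpha^*)=f(\beta_0^*,\beta^*)$ by the attainment statement above. For an arbitrary $(\beta_0,\beta,\alpha)$ one then has the chain $L(\beta_0,\beta,\alpha)\ge\min_{\alpha'}L(\beta_0,\beta,\alpha')=f(\beta_0,\beta)\ge f(\beta_0^*,\beta^*)=L(\beta_0^*,\beta^*,\alpha^*)$, so $(\beta_0^*,\beta^*,\alpha^*)$ minimizes $L$, i.e.\ is optimal for \eqref{problem:STRLS}.

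For (ii), let $(\beta_0^*,\beta^*,\alpha^*)$ be optimal for \eqref{problem:STRLS}. For any $(\beta_0,\beta)$, choosing $\alpha=\prox_{\frac12 T_h}(y-\beta_0\bm{1}-X\beta)$ gives $f(\beta_0,\beta)=L(\beta_0,\beta,\alpha)\ge L(\beta_0^*,\beta^*,\alpha^*)\ge\min_{\alpha'}L(\beta_0^*,\beta^*,\alpha')=f(\beta_0^*,\beta^*)$. Since $(\beta_0,\beta)$ was arbitrary, $(\beta_0^*,\beta^*)$ minimizes $f$, hence is optimal for \eqref{problem:SLTS}.

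I do not expect a genuine obstacle; the only point needing care is that the infimum defining the Moreau envelope is actually attained, which is exactly what Proposition~\ref{prop:prox-trimmed-squares} guarantees (nonemptiness of $\prox_{\frac12 T_h}$), and this is what lets me write the equalities $L(\beta_0^*,\beta^*,\alpha^*)=f(\beta_0^*,\beta^*)$ in both directions rather than mere inequalities.
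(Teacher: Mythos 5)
Your proof is correct and follows exactly the route the paper intends: instantiating Proposition \ref{prop:prox-trimmed-squares} with $\gamma=\tfrac12$ so that $\tfrac{\gamma}{2\gamma+1}=\tfrac14$, and then the standard partial-minimization equivalence between $\min_{\beta_0,\beta,\alpha}L$ and $\min_{\beta_0,\beta}f$, with attainment of the inner minimum guaranteed by the nonemptiness of $\prox_{\frac12 T_h}$. The paper leaves this argument implicit, and your write-up supplies precisely the missing details.
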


According to Corollary \ref{cor:equivalence}, we consider solving the STRLS instead of the SLTS.
Roughly speaking, since the STRLS succeeds in separating residuals $y-\beta_0\bm{1}-X\beta$ from the intractable nonconvexity and nonsmoothness of the trimmed squares function $T_h$, it is possible to apply the proximal gradient method.
Particularly, as we will see in the next section, the STRLS has desirable properties for applying the first-order methods.
Finally, we note that our reformulation technique is also applicable to nonlinear regression models $f_\beta$ and general nonconvex regularizer for $\beta$ because it does not depend on the form of the residuals with respect to the model parameter $\beta$ and the form of the regularization term.

\section{Proximal gradient method for STRLS}\label{sec:PGM}
Our proposed algorithm is based on the PGM.
Let $\eta_{\beta_0}^{(t)},\eta_\beta^{(t)},\eta_\alpha^{(t)}>0$, which correspond to the inverse of the stepsizes, the PGM repeats
\begin{align}
    &(\beta_0^{(t+1)},\beta^{(t+1)},\alpha^{(t+1)})\\
    &\in \argmin_{\beta_0,\beta,\alpha}\Big\{\innerprod{\nabla l(\beta_0^{(t)},\beta^{(t)},\alpha^{(t)})}{(\beta_0,\beta,\alpha)}+\frac{1}{2}\|(\beta_0,\beta,\alpha)-(\beta_0^{(t)},\beta^{(t)},\alpha^{(t)})\|_{\eta^{(t)}}^2 + \frac{1}{2}T_h(\alpha) + \lambda\|\beta\|_1\Big\},
\end{align}
where $l(\beta_0,\beta,\alpha)\coloneqq\frac{1}{2}\|y-\beta_0\bm{1}-X\beta-\alpha\|_2^2$, $\eta^{(t)}\coloneqq(\eta_{\beta_0}^{(t)},\eta_\beta^{(t)},\eta_\alpha^{(t)})$, and
\begin{equation}
    \|(\beta_0,\beta,\alpha)-(\beta_0^{(t)},\beta^{(t)},\alpha^{(t)})\|_{\eta^{(t)}} \coloneqq \Big\{\eta_{\beta_0}^{(t)}(\beta_0-\beta_0^{(t)})^2 + \eta_\beta^{(t)}\|(\beta-\beta^{(t)})\|_2^2 + \eta_\alpha^{(t)}\|\alpha-\alpha^{(t)}\|_2^2\Big\}^{\frac{1}{2}}.
\end{equation}
Namely, the PGM updates the sequence by the minimizer of the sum of linearization of the smooth term around the current iteration, quadratic regularization term, and nonsmooth terms.
From the separability of $\beta_0$, $\beta$, and $\alpha$ in the above subproblem, it follows from Proposition \ref{prop:prox-trimmed-squares} and the proximal mapping of the $\ell_1$ norm that
\begin{align}
    \beta_0^{(t+1)} &= \beta_0^{(t)} - \frac{1}{\eta_{\beta_0}^{(t)}}\nabla_{\beta_0}l(\beta_0^{(t)},\beta^{(t)},\alpha^{(t)}),\label{eq:intercept-update}\\
    \beta_j^{(t+1)} &= \mathcal{S}_\frac{\lambda}{\eta_{\beta_j}^{(t)}}\Big(\beta_j^{(t)} - \frac{1}{\eta_{\beta}^{(t)}}\nabla_{\beta_j}l(\beta_0^{(t)},\beta^{(t)},\alpha^{(t)})\Big) \quad (j=1,\ldots,d),\label{eq:cofficient-update}\\
    \alpha^{(t+1)} &\in \prox_{\frac{1}{2\eta_\alpha^{(t)}}T_h}\Big(\alpha^{(t)} - \frac{1}{\eta_{\alpha}^{(t)}}\nabla_\alpha l(\beta_0^{(t)},\beta^{(t)},\alpha^{(t)})\Big),\label{eq:absorber-update}
\end{align}
where $\nabla_{\beta_j}l$ and $\nabla_\alpha l$ are the gradient with respect to $\beta_j$ and $\alpha$, respectively.
The precise algorithm is described in Algorithm \ref{alg:PGM}.

\begin{algorithm}[H]
\caption{PGM for STRLS}
    \label{alg:PGM}
    \begin{algorithmic}
    \STATE {\bfseries Input:} $(\beta_0^{(0)},\beta^{(0)},\alpha^{(0)}),~ c_1>1,~ 0<c_2<1,~ \overline{\eta}\ge\underline{\eta}>0$, and $t=0$.
    \REPEAT
    \STATE Choose $\eta_{\beta_0}^{(t)},\eta_\beta^{(t)},\eta_\alpha^{(t)} \in [\underline{\eta},\overline{\eta}]$.
    \STATE Compute $(\beta_0^{(t+1)},\beta^{(t+1)},\alpha^{(t+1)})$ by \eqref{eq:intercept-update}, \eqref{eq:cofficient-update}, and \eqref{eq:absorber-update}.
    \WHILE{$L(\beta_0^{(t+1)},\beta^{(t+1)},\alpha^{(t+1)})>L(\beta_0^{(t)},\beta^{(t)},\alpha^{(t)})-\frac{c_2}{2}\|(\beta_0^{(t+1)},\beta^{(t+1)},\alpha^{(t+1)})-(\beta_0^{(t)},\beta^{(t)},\alpha^{(t)})\|_{\eta^{(t)}}^2$}
    \STATE Set $\eta_{\beta_0}^{(t)}\leftarrow c_1\eta_{\beta_0}^{(t)}$, $\eta_{\beta}^{(t)}\leftarrow c_1\eta_{\beta}^{(t)}$, and $\eta_\alpha^{(t)}\leftarrow c_1\eta_\alpha^{(t)}$.
    \STATE Compute $(\beta_0^{(t+1)},\beta^{(t+1)},\alpha^{(t+1)})$ by \eqref{eq:intercept-update}, \eqref{eq:cofficient-update}, and \eqref{eq:absorber-update}.
    \ENDWHILE
    \STATE Set $t\leftarrow t+1$.
    \UNTIL Terminated criterion is satisfied.
    \end{algorithmic}
\end{algorithm}

Note that the objective value is monotonically nonincreasing because of the acceptance criterion of the inverse of stepsizes.
The convergence result of the PGM for the STRLS is provided as follows.
The proof is displayed in Appendix \ref{sec:proofs}.

\begin{theorem}\label{thm:linear-convergence}
There exists a local minimum $(\beta_0^*,\beta^*,\alpha^*)$ of \eqref{problem:STRLS} such that $\{(\beta_0^{(t)},\beta^{(t)},\alpha^{(t)})\}$ converges to it.
In particular, there exist $C_1,C_2>0$ and $q_1,q_2\in(0,1)$ such that
\begin{align}
    L(\beta_0^{(t)},\beta^{(t)},\alpha^{(t)})-L(\beta_0^*,\beta^*,\alpha^*) &\le C_1q_1^t,\\
    \|(\beta_0^{(t)},\beta^{(t)},\alpha^{(t)})-(\beta_0^*,\beta^*,\alpha^*)\|_2 &\le C_2q_2^t.
\end{align}
\end{theorem}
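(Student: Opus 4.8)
The strategy is to apply the now-standard machinery for proximal gradient methods on nonconvex problems satisfying the Kurdyka--Łojasiewicz (KL) inequality, but to exploit the special structure of $L$ to upgrade the generic ``convergence to a critical point'' into ``linear convergence to a local minimum.'' First I would record the sufficient-decrease property: by the linesearch acceptance criterion, $L(\beta_0^{(t+1)},\beta^{(t+1)},\alpha^{(t+1)}) \le L(\beta_0^{(t)},\beta^{(t)},\alpha^{(t)}) - \frac{c_2}{2}\|z^{(t+1)}-z^{(t)}\|_{\eta^{(t)}}^2$ (writing $z=(\beta_0,\beta,\alpha)$), and the linesearch terminates after finitely many backtracks because $l$ has Lipschitz gradient, so the accepted inverse stepsizes stay bounded, $\eta^{(t)}_\bullet \in [\underline\eta, \overline\eta']$ for some fixed $\overline\eta'$. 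This yields $\sum_t \|z^{(t+1)}-z^{(t)}\|_2^2 < \infty$ and that every accumulation point is a critical point of $L$ in the sense of the optimality condition $0 \in \nabla l(z^{(t+1)}) + \partial(\frac12 T_h + \lambda\|\cdot\|_1)(z^{(t+1)}) + (\text{error term vanishing as }\|z^{(t+1)}-z^{(t)}\|\to 0)$. Boundedness of the iterates needs a word: one argues the level set $\{L \le L(z^{(0)})\}$ is bounded, which follows because on that set $T_h(\alpha)$ and $\|\beta\|_1$ are bounded, hence $\beta$ is bounded, hence $\alpha$ (up to $h$ coordinates) and the residual are bounded, pinning down $\beta_0$ as well --- I would state this as a short lemma or fold it into the argument.

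Next comes the key structural observation that makes the result stronger than the generic KL conclusion. The function $L$ is a piecewise quadratic: partition $\mathbb{R}^n$ according to which index set $I\in\mathcal{I}$ is active (i.e., fix an ordering of $|\alpha_i|$); on each such polyhedral piece, $\frac12 T_h(\alpha)$ agrees with the convex quadratic $\frac12\sum_{i\in I}\alpha_i^2$, so $L$ restricted to the piece is a convex quadratic plus $\lambda\|\beta\|_1$ --- in particular convex. This means (i) $L$ is a KL function with exponent $1/2$, since it is semialgebraic and more precisely piecewise-convex-quadratic-plus-polyhedral (one can invoke Łojasiewicz exponent $1/2$ results for such functions, e.g.\ the results on KL exponents of piecewise linear-quadratic or of $\ell_1$-regularized least squares, applied piecewise), and (ii) at an accumulation point $z^*$, the trimmed set stabilizes: for $t$ large, $\mathcal{I}$ at $z^{(t)}$ and at $z^*$ share a common index set, because the ordering of residual magnitudes is eventually locally constant unless $z^*$ lies on a tie-boundary, and one checks that accumulation points with ties can be handled (either ruled out or shown that $L$ still behaves as the convex quadratic on the relevant cone). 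Once the active set is frozen, the tail of the sequence is exactly the proximal gradient iteration for a \emph{convex} problem $\min \frac12\|y-\beta_0\bm 1 - X\beta-\alpha\|^2 + \frac12\sum_{i\in I}\alpha_i^2 + \lambda\|\beta\|_1$, so $z^*$ is a genuine local minimum of $L$ (it is a global min of the convex restriction, and nearby points in other pieces have no smaller trimmed value, so $L(z)\ge L(z^*)$ locally).

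With the KL inequality of exponent $1/2$ in hand --- i.e., $\dist(0,\partial L(z)) \ge c\,(L(z)-L(z^*))^{1/2}$ near $z^*$ --- the linear-rate argument is the textbook one (Attouch--Bolte style / Frankel--Garrigos--Peypouquet): combine sufficient decrease $L(z^{(t)})-L(z^{(t+1)}) \ge a\|z^{(t+1)}-z^{(t)}\|^2$, the relative-error bound $\dist(0,\partial L(z^{(t+1)})) \le b\|z^{(t+1)}-z^{(t)}\|$, and the KL inequality to get $L(z^{(t+1)})-L(z^*) \le q_1 (L(z^{(t)})-L(z^*))$ for some $q_1\in(0,1)$; then telescoping $\sum_{s\ge t}\|z^{(s+1)}-z^{(s)}\| \lesssim (L(z^{(t)})-L(z^*))^{1/2} \lesssim q_1^{t/2}$ gives both Cauchyness (hence convergence of $z^{(t)}$, not merely subsequential) and the iterate rate $\|z^{(t)}-z^*\|_2 \le C_2 q_2^t$ with $q_2=\sqrt{q_1}$.

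\textbf{Main obstacle.} The delicate point is the active-set stabilization together with establishing the KL exponent $1/2$ \emph{uniformly} across the finitely many quadratic pieces and on their common boundaries. Away from ties this is clean, but one must verify that if $z^*$ sits on a boundary between pieces (some residuals tie in magnitude), then (a) the iterates still eventually stay in a neighborhood on which $L$ coincides with a single convex piecewise-quadratic majorant whose critical point is $z^*$, and (b) the Łojasiewicz inequality holds there with exponent $1/2$ --- this requires a careful local analysis of $T_h$ near tie configurations, and is where I would expect to spend the most effort; the fallback is to show directly, using Proposition~\ref{prop:prox-trimmed-squares} to characterize $\partial(\frac12 T_h)$, that $\dist(0,\partial L(z))^2 \gtrsim L(z)-L(z^*)$ holds on a full neighborhood of $z^*$ regardless of ties.
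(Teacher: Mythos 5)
Your high-level architecture coincides with the paper's: sufficient decrease from the linesearch acceptance criterion, boundedness of the backtracked inverse stepsizes via the descent lemma, boundedness of the level sets (the paper makes your ``pinning down $\beta_0$'' step precise via the injectivity of $(\beta_0,\alpha_{\Lambda^c})\mapsto\beta_0\bm{1}+E_{\Lambda^c}\alpha_{\Lambda^c}$), the representation $L=\min_{|\Lambda|=h}L_\Lambda$ with each $L_\Lambda$ convex piecewise quadratic to get KL exponent $1/2$ (the paper cites Corollary 5.2 of \citet{li2018calculus} for exactly this), and then the Attouch--Bolte relative-error machinery plus \citet{frankel2015splitting} for whole-sequence convergence and the two linear rates. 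All of that is fine and is what the paper does.

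The genuine gap is in your argument that the limit is a \emph{local minimum}. You propose active-set stabilization: freeze an index set $I$, note the tail is a proximal gradient iteration for the convex piece $L_I$, conclude $z^*$ globally minimizes $L_I$, and then assert that nearby points in other pieces ``have no smaller trimmed value.'' That last step fails precisely in the tie case you flag as the obstacle: if $I'$ is another piece active at $z^*$, then $L_{I'}(z^*)=L_I(z^*)$ but $\nabla_\alpha(L_{I'}-L_I)(z^*)$ has entries $\pm\alpha_i^*$ on the symmetric difference $I\triangle I'$, which are nonzero unless the tied residuals vanish; so $0\in\partial L_I(z^*)$ does not give $0\in\partial L_{I'}(z^*)$, and $L$ can decrease linearly from $z^*$ into the piece $I'$ even though $z^*$ minimizes $L_I$ (compare $\min\{x^2,2x\}$ at $x=0$: both pieces active, $0$ minimizes the first piece, yet $0$ is not a local minimum). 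The generic KL machinery only yields l-stationarity, $0\in\partial L(z^*)$, which at a tie point amounts to criticality of \emph{some} active piece and is therefore insufficient; your fallback (a direct error bound) repairs the KL exponent at ties but not this local-optimality claim. The paper closes exactly this hole without any active-set stabilization: it passes to the limit in the subproblem optimality condition tested against arbitrary perturbations $(\beta_0^*,\beta^*,\alpha^*)+\tau d$, obtaining $L'((\beta_0^*,\beta^*,\alpha^*);d)\ge 0$ for all $d$ (d-stationarity, i.e.\ every active piece has nonnegative directional derivative), and then invokes Corollary 1 of \citet{yagishita2022exact} (Lemma \ref{lemma:l-opt<=>d-stat}) stating that every d-stationary point of a pointwise minimum of finitely many convex functions is a local minimizer. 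To complete your proof you would need to either reproduce that d-stationarity argument or otherwise show that all active pieces, not just the one the tail allegedly settles in, are minimized at $z^*$.
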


Theorem \ref{thm:linear-convergence} shows that the convergence to a local minimizer of the STRLS, that is, it says that the sequence never attains any saddle point.
Typically, the PGM only guarantees the subsequential convergence to a stationary point.
In fact, the STRLS has no saddle points and thus the convergence to a locally optimal solution is obtained.
See Appendix \ref{sec:proofs} for the precise statement.
In addition, the sequence $\{(\beta_0^{(t)},\beta^{(t)},\alpha^{(t)})\}$ and its loss function values converge linearly.
Linear convergence is not guaranteed for general nonconvex optimization problems.
This desirable property also stems from the tractability of the STRLS.
Specifically, the objective function of the STRLS has the Kurdyka--\L ojasiewicz (KL) exponent of $1/2$ (see \citet{li2018calculus} for the definition), which is known to serve linear convergence for many first-order methods.
The TSVRG does not guarantee convergence to a locally optimal solution and only guarantees sublinear convergence of a certain optimality measure \citep{aravkin2020trimmed}.
Although \citet{aravkin2020trimmed} shows linear convergence of the TSVRG assuming a certain error bound, we are not sure if the establishment of the error bound can be validated.
The FAST-LTS is only guaranteed monotonicity of the loss function values.
Therefore, our proposed method has more favorable convergence properties.

\subsection{Extension to nonlinear models and general regularizer}
Let us consider the algorithmic aspect of the trimmed squares estimation of nonlinear models by the trimmed-regularized reformulation technique.
Specifically, the following trimmed-regularized problem is considered:
\begin{equation}\label{problem:general-TRLS}
    \underset{\beta\in\mathbb{R}^p,\alpha\in\mathbb{R}^n}{\mbox{minimize}} \quad \underbrace{\frac{1}{2}\|y-f_\beta(X)-\alpha\|_2^2}_{\eqqcolon \Tilde{l}(\beta,\alpha)} + \frac{1}{2}T_h(\alpha) + \lambda P(\beta) \eqqcolon \Tilde{L}(\beta,\alpha),
\end{equation}
where $P:\mathbb{R}^d\to(-\infty,\infty]$ is a penalty of some kind and $f_\beta(X)$ denotes $(f_\beta(x_1),\ldots,f_\beta(x_n))^\top$ for the convention.
Assuming that the mapping $\beta\mapsto f_\beta(X)$ is continuously differentiable and the proximal mapping of $cP$ can be easily computed for any $c>0$, each iteration of the PGM
\begin{align}
    \beta^{(t+1)} &\in \prox_{\frac{\lambda}{\eta_{\beta}^{(t)}}P}\Big(\beta^{(t)} - \frac{1}{\eta_{\beta}^{(t)}}\nabla_\beta \Tilde{l}(\beta^{(t)},\alpha^{(t)})\Big),\label{eq:cofficient-update-nonlinear}\\
    \alpha^{(t+1)} &\in \prox_{\frac{1}{2\eta_\alpha^{(t)}}T_h}\Big(\alpha^{(t)} - \frac{1}{\eta_{\alpha}^{(t)}}\nabla_\alpha \Tilde{l}(\beta^{(t)},\alpha^{(t)})\Big)\label{eq:absorber-update-nonlinear}
\end{align}
can be efficiently performed for \eqref{problem:general-TRLS}.
The algorithm is summarized in Algorithm \ref{alg:PGM-nonlinear}.

\begin{algorithm}[H]
\caption{PGM for \eqref{problem:general-TRLS}}
    \label{alg:PGM-nonlinear}
    \begin{algorithmic}
    \STATE {\bfseries Input:} $(\beta^{(0)},\alpha^{(0)}),~ c_1>1,~ 0<c_2<1,~ \overline{\eta}\ge\underline{\eta}>0$, and $t=0$.
    \REPEAT
    \STATE Choose $\eta_\beta^{(t)},\eta_\alpha^{(t)} \in [\underline{\eta},\overline{\eta}]$.
    \STATE Compute $(\beta^{(t+1)},\alpha^{(t+1)})$ by \eqref{eq:cofficient-update-nonlinear} and \eqref{eq:absorber-update-nonlinear}.
    \WHILE{$\Tilde{L}(\beta^{(t+1)},\alpha^{(t+1)})>\Tilde{L}(\beta^{(t)},\alpha^{(t)})-\frac{c_2}{2}\|(\beta^{(t+1)},\alpha^{(t+1)})-(\beta^{(t)},\alpha^{(t)})\|_{\eta^{(t)}}^2$}
    \STATE Set $\eta_{\beta}^{(t)}\leftarrow c_1\eta_{\beta}^{(t)}$ and $\eta_\alpha^{(t)}\leftarrow c_1\eta_\alpha^{(t)}$.
    \STATE Compute $(\beta^{(t+1)},\alpha^{(t+1)})$ by \eqref{eq:cofficient-update-nonlinear} and \eqref{eq:absorber-update-nonlinear}.
    \ENDWHILE
    \STATE Set $t\leftarrow t+1$.
    \UNTIL Terminated criterion is satisfied.
    \end{algorithmic}
\end{algorithm}

Unlike the linear model setting, Lipschitz continuity does not necessarily hold for $\nabla\Tilde{l}$.
The TSVRG is not applicable in such situation.
On the other hand, although the result is not as strong as in the linear model setting, the subsequential convergence to l-stationary points can be established as follows.

\begin{theorem}\label{thm:subsequential-convergence}
Assume the following:
\begin{enumerate}[(i)]
    \item The mapping $\beta\mapsto f_\beta(X)$ is continuously differentiable;
    \item The penalty $P$ is bounded from below;
    \item The penalty $P$ is lower semicontinuous;
    \item The penalty $P$ is continuous on $\dom P\coloneqq\{\beta\mid P(\beta)<\infty\}$.
\end{enumerate}
Then, either there exist $t$ such that $(\beta^{(t)},\alpha^{(t)})$ is a l-stationary point of \eqref{problem:general-TRLS} or any accumulation point of $\{(\beta^{(t)},\alpha^{(t)})\}$ is a l-stationary point of \eqref{problem:general-TRLS}.
\end{theorem}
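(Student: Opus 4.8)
The plan is to follow the standard convergence scheme for the proximal gradient method, adapted to two features of \eqref{problem:general-TRLS}: $\nabla\Tilde{l}$ is merely continuous (hypothesis (i)), so no descent lemma is available, and $\prox_{\gamma T_h}$ is set-valued. I would distinguish the case in which the inner WHILE loop fails to terminate at some outer index $t$ --- which I will show forces $(\beta^{(t)},\alpha^{(t)})$ to be l-stationary, i.e.\ the first alternative --- from the case in which every inner loop terminates and the algorithm produces an infinite sequence.

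For the first case, fix $t$ and suppose the inner loop never exits, so the inverse stepsizes are multiplied by $c_1$ indefinitely; since backtracking preserves their ratio, their minimum $\eta_{\min}^{(t)}$ and $\eta_\alpha^{(t)}$ both tend to $\infty$ proportionally. From the optimality of the subproblem combined with the repeated failure of the acceptance test, together with the first-order estimate $\Tilde{l}(z^{(t+1)})-\Tilde{l}(z^{(t)})-\innerprod{\nabla\Tilde{l}(z^{(t)})}{z^{(t+1)}-z^{(t)}}=o(\|z^{(t+1)}-z^{(t)}\|_2)$ supplied by continuity of $\nabla\Tilde{l}$ (writing $z=(\beta,\alpha)$), and an energy bound using $\Tilde{l},T_h\ge0$ and boundedness below of $P$ (hypothesis (ii)), I would obtain $z^{(t+1)}\to z^{(t)}$ and $\eta_{\min}^{(t)}\|z^{(t+1)}-z^{(t)}\|_2\to0$ along the backtracking. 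Here Proposition \ref{prop:prox-trimmed-squares} is essential: its formula shows that each nonzero component of $\alpha^{(t+1)}-\alpha^{(t)}$ has magnitude at least a positive constant times $1/(1+\eta_\alpha^{(t)})$, so $\eta_\alpha^{(t)}\|\alpha^{(t+1)}-\alpha^{(t)}\|_2\to0$ is impossible unless the $\alpha$-update is eventually inactive; this pins down the $\alpha$-block of the l-stationarity conditions at $(\beta^{(t)},\alpha^{(t)})$. For the $\beta$-block, Fermat's rule applied to \eqref{eq:cofficient-update-nonlinear} gives $-\nabla_\beta\Tilde{l}(z^{(t)})-\eta_\beta^{(t)}(\beta^{(t+1)}-\beta^{(t)})\in\lambda\partial P(\beta^{(t+1)})$; passing to the limit, using $\eta_\beta^{(t)}(\beta^{(t+1)}-\beta^{(t)})\to0$, $\beta^{(t+1)}\to\beta^{(t)}$, $P(\beta^{(t+1)})\to P(\beta^{(t)})$ by continuity of $P$ on $\dom P$ (hypothesis (iv)) with lower semicontinuity (hypothesis (iii)), and the closed graph of $\partial P$, yields $-\nabla_\beta\Tilde{l}(z^{(t)})\in\lambda\partial P(\beta^{(t)})$. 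Hence $(\beta^{(t)},\alpha^{(t)})$ is l-stationary.

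For the second case, the acceptance criterion makes $\{\Tilde{L}(\beta^{(t)},\alpha^{(t)})\}$ nonincreasing, and it is bounded below, hence convergent; consequently $\sum_t\|z^{(t+1)}-z^{(t)}\|_{\eta^{(t)}}^2<\infty$, and since $\eta_\beta^{(t)},\eta_\alpha^{(t)}\ge\underline\eta$ we get $\|z^{(t+1)}-z^{(t)}\|_2\to0$. Let $\bar z=(\bar\beta,\bar\alpha)$ be an accumulation point, $z^{(t_k)}\to\bar z$; then $z^{(t_k+1)}\to\bar z$ too. Since these iterates eventually lie in a fixed compact set on which $\nabla\Tilde{l}$ is uniformly continuous, one shows the number of backtracking steps stays uniformly bounded, hence so do $\eta_\beta^{(t_k)},\eta_\alpha^{(t_k)}$; pass to a further subsequence along which they converge. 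I would then let $k\to\infty$ in the Fermat inclusions for \eqref{eq:cofficient-update-nonlinear} and \eqref{eq:absorber-update-nonlinear}, using continuity of $\nabla\Tilde{l}$, the fact that $\eta^{(t_k)}(z^{(t_k+1)}-z^{(t_k)})\to0$ (bounded times a null sequence), lower semicontinuity and $\dom P$-continuity of $P$ so that $P(\beta^{(t_k+1)})\to P(\bar\beta)$ and $\bar\beta\in\dom P$ (the latter because $\lambda P(\beta^{(t_k+1)})\le\Tilde{L}(\beta^{(0)},\alpha^{(0)})$), continuity of $T_h$, and the closed graphs of $\partial P$ and $\partial T_h$. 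This gives $-\nabla_\beta\Tilde{l}(\bar z)\in\lambda\partial P(\bar\beta)$ and $-\nabla_\alpha\Tilde{l}(\bar z)\in\tfrac12\partial T_h(\bar\alpha)$, i.e.\ $\bar z$ is l-stationary.

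The hardest point is that, with $\nabla\Tilde{l}$ only continuous, neither finiteness of the inner loop nor boundedness of the accepted stepsizes along a convergent subsequence follows from a single Lipschitz constant; both need a local, per-iteration analysis. The decisive structural ingredient is Proposition \ref{prop:prox-trimmed-squares}, whose explicit formula excludes arbitrarily small nonzero moves of the $\alpha$-block and thereby resolves the degenerate situations, while the $\beta$-block is handled entirely by limiting arguments in Fermat's rule --- which is exactly what hypotheses (ii)--(iv) on $P$ are designed for.
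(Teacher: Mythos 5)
Your overall architecture is the same as the paper's: split into the case where the inner loop fails to terminate at some $t$ (equivalently, the contrapositive Lemma \ref{lemma:well-definedness}) and the case of an infinite outer sequence; in both cases write Fermat's rule for the subproblem, show the correction term $\eta^{(t)}(z^{(t+1)}-z^{(t)})$ vanishes, and pass to the limit in the (Fr\'echet, then limiting) subdifferential using continuity of $\nabla\Tilde{l}$, continuity of $T_h$, and hypotheses (ii)--(iv) on $P$ to secure $\Tilde{R}(z^{(t_k+1)})\to\Tilde{R}(\bar z)$. Your first case is essentially correct (the detour through the explicit formula of Proposition \ref{prop:prox-trimmed-squares} for the $\alpha$-block is unnecessary --- the same closed-graph argument you use for $\partial P$ works for $\partial T_h$ since $T_h$ is continuous --- but it is not wrong), and your derivation of $z^{(t+1)}\to z^{(t)}$ and $\eta_{\min}^{(t)}\|z^{(t+1)}-z^{(t)}\|_2\to0$ from the subproblem optimality, the failed acceptance test, and the $o(\|z^{(t+1)}-z^{(t)}\|_2)$ Taylor estimate is exactly the argument behind Lemma \ref{lemma:well-definedness}.

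The genuine gap is in your second case, at the step you yourself flag as the hardest. You assert that, along a subsequence $z^{(t_k)}\to\bar z$, uniform continuity of $\nabla\Tilde{l}$ on a compact set implies the number of backtracking steps --- hence the accepted $\eta_\beta^{(t_k)},\eta_\alpha^{(t_k)}$ --- stays bounded, and you then get $\eta^{(t_k)}(z^{(t_k+1)}-z^{(t_k)})\to0$ as ``bounded times null.'' This boundedness claim is not proved and is not available under mere continuity of $\nabla\Tilde{l}$: the acceptance test at trial parameter $\eta$ reduces (via the subproblem optimality) to $\omega(\delta_\eta)\le\tfrac{1-c_2}{2}\eta_{\min}\,\delta_\eta$ with $\delta_\eta=\|z^+_\eta-z^{(t_k)}\|_2$ and $\omega$ a modulus of continuity, and when $\delta_\eta$ is extremely small relative to $\eta$ the ratio $\omega(\delta_\eta)/\delta_\eta$ can blow up, so no uniform cap on the backtracking count follows. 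This is precisely why the paper does not bound the stepsizes at all: its Lemma \ref{lemma:subsequential-diminishing} (Proposition 3.2 of \citet{kanzow2022convergence}) establishes $\eta_\beta^{(t_k)}\|\beta^{(t_k+1)}-\beta^{(t_k)}\|_2\to0$ and $\eta_\alpha^{(t_k)}\|\alpha^{(t_k+1)}-\alpha^{(t_k)}\|_2\to0$ directly, by a contradiction argument that examines the last \emph{rejected} trial parameter $\eta^{(t_k)}/c_1$ in the subcase where the accepted parameters are unbounded. Without that lemma (or an equivalent substitute), the limit passage in your Fermat inclusions does not go through, so this step needs to be repaired rather than asserted.
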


See Appendix \ref{sec:proofs} for the definition of the l-stationarity and the proof.
Theorem \ref{thm:subsequential-convergence} is obtained from an extension of Theorem 3.1 in \citet{kanzow2022convergence}.

\section{Numerical examples}\label{sec:experiments}
To show the efficiency of our approach for the SLTS \eqref{problem:SLTS}, numerical experiments are conducted.
Throughout the experiments, the covariates $\Tilde{x}_1,\ldots,\Tilde{x}_n$ were independently generated from $N(0,\Sigma)$ with $\Sigma_{ij}=0.5^{|i-j|}$.
The intercept $\Tilde{\beta_0}$ and elements of coefficient vector $\Tilde{\beta}$ were drawn from $N(0,1)$ and each element of $\Tilde{\beta}$ was replaced by $0$ with probability $0.1$.
The response $\Tilde{y}_i$ followed
\begin{equation}
    \Tilde{y}_i = \Tilde{\beta_0} + \Tilde{x}_i^\top\Tilde{\beta} + e_i,
\end{equation}
where $e_i\sim N(20,2)$ for the $10\%$ sample and $e_i\sim N(0,1)$ for the rest.
Then, by using the median and median absolute deviation, $\Tilde{y}$ was centered and $\Tilde{X}$ was standardized to construct $y$ and $X$, respectively.
We took $h=\lfloor 0.75n \rfloor$.
As for setting of Algorithm \ref{alg:PGM}, $c_1=2,~ c_2=10^{-4},~ \underline{\eta}=10^{-10} ,~ \overline{\eta}=10^{10}$ were used.
To determine the initial of the inverse of the stepsizes, we used the following variable-wise variant of the Barzilai-Borwein initialization rule \citep{barzilai1988two}:
\begin{align}\label{eq:vwBB}
\eta_{\beta_0}^{(t)} &= \min\left\{\overline{\eta}, \max\left\{\underline{\eta},\frac{|\nabla_{\beta_0}l(\beta_0^{(t)},\beta^{(t)},\alpha^{(t)})-\nabla_{\beta_0}l(\beta_0^{(t-1)},\beta^{(t-1)},\alpha^{(t-1)})|}{|\beta_0^{(t)}-\beta_0^{(t-1)}|}\right\}\right\},\\
\eta_\beta^{(t)} &= \min\left\{\overline{\eta}, \max\left\{\underline{\eta},\frac{\innerprod{\nabla_\beta l(\beta_0^{(t)},\beta^{(t)},\alpha^{(t)})-\nabla_\beta l(\beta_0^{(t-1)},\beta^{(t-1)},\alpha^{(t-1)})}{\beta^{(t)}-\beta^{(t-1)}}}{\|\beta^{(t)}-\beta^{(t-1)}\|_2^2}\right\}\right\},\\
\eta_\alpha^{(t)} &= \min\left\{\overline{\eta}, \max\left\{\underline{\eta},\frac{\innerprod{\nabla_\alpha l(\beta_0^{(t)},\beta^{(t)},\alpha^{(t)})-\nabla_\alpha l(\beta_0^{(t-1)},\beta^{(t-1)},\alpha^{(t-1)})}{\alpha^{(t)}-\alpha^{(t-1)}}}{\|\alpha^{(t)}-\alpha^{(t-1)}\|_2^2}\right\}\right\}.
\end{align}
For $t=0$, $\eta_{\beta_0}^{(0)},\eta_\beta^{(0)},\eta_\alpha^{(0)}$ were set as $1$.
Algorithm \ref{alg:PGM} is terminated when the terminated criterion
\begin{equation}
    \|w^{(t)}\|_2 \le \|\nabla l(\beta_0^{(0)},\beta^{(0)},\alpha^{(0)})\|_2\times10^{-6}
\end{equation}
was satisfied or the number of iterations reacheed the maximum number of iterations $t_{\max}$, where
\begin{align}
    w^{(t)} &\coloneqq \nabla l(\beta_0^{(t)},\beta^{(t)},\alpha^{(t)}) - \nabla l(\beta_0^{(t-1)},\beta^{(t-1)},\alpha^{(t-1)})\\
    &\qquad - (\eta_{\beta_0}^{(t-1)}(\beta_0^{(t)}-\beta_0^{(t-1)}),\eta_\beta^{(t-1)}(\beta^{(t)}-\beta^{(t-1)}),\eta_\alpha^{(t-1)}(\alpha^{(t)}-\alpha^{(t-1)})).
\end{align}
Similar terminated criteria were used for the TSVRG and the PGM in inner iterations of the FAST-SLTS.
The objective function value of \eqref{problem:STRLS} is used in the evaluation criterion.
All the algorithms were implemented in MATLAB R2023b, and all the computations were conducted on a PC with OS: Windows CPU: 2.60 GHz and 16.0 GB memory.

\subsection{Comparison with TSVRG}
The proposed method is first compared with the TSVRG.
The maximum number of iterations was set as $t_{\max}=10^6$.
Each element of $(\beta_0^{(0)},\beta^{(0)})$ was generated from $N(0,1)$ and $\alpha^{(0)}$ was determined as $\alpha^{(t)}\in\argmin_\alpha L(\beta_0^{(0)},\beta^{(0)},\alpha)=\prox_{\frac{1}{2}T_h}(y-\beta_0^{(0)}\bm{1}-X\beta^{(0)})$. 
Convergence behaviors of the PGM and the TSVRG are shown in Figure \ref{fig:vs-TSVRG}.
While the PGM satisfied the terminated criterion before reaching the maximum number of iterations, the TSVRG did not.
Obviously, convergence of the TSVRG was quite slow.
This is because the stepsizes used in the TSVRG is restricted to very small.
We see from Figure \ref{fig:vs-TSVRG} (a) and (c) that the computation time of the PGM is decreasing in contrast to the increase in the dimension of variables.
We consider that this is due to the fact that the contribution of the improvement in the condition number of $X^\top X$ exceeds that of the increase in the number of variables.
This phenomenon is also observed in the next experiment.

\begin{figure}[H]
    \centering
    \begin{tabular}{cc}
        (a) $n=100,~ d=200$ & (b) $n=100,~ d=1000$ \\
        \begin{minipage}[t]{0.45\linewidth}
        \centering
        \includegraphics[width=1.0\columnwidth]{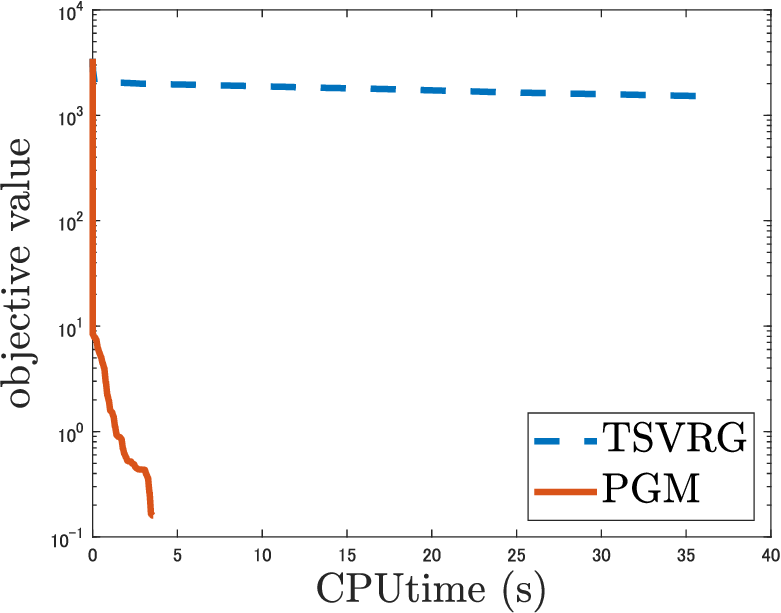}
        \end{minipage} &
        \begin{minipage}[t]{0.45\linewidth}
        \centering
        \includegraphics[width=1.0\columnwidth]{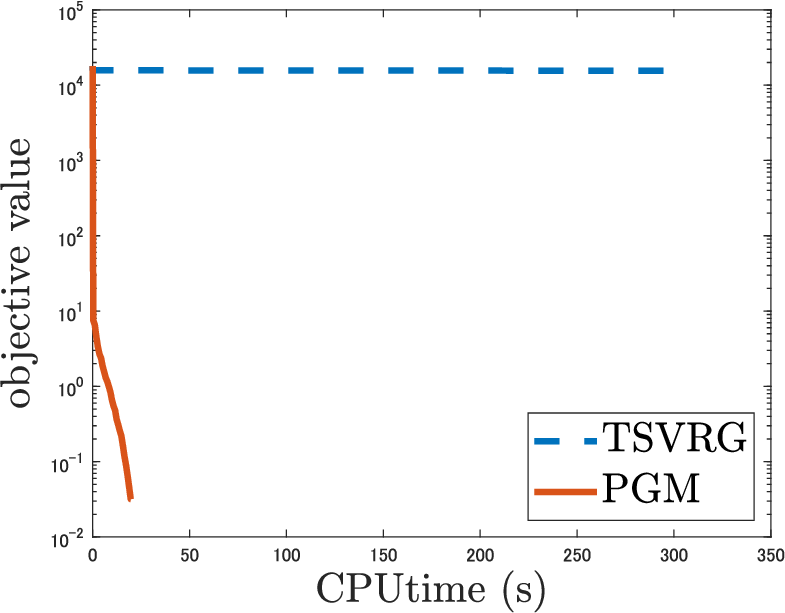}
        \end{minipage} \\
        (c) $n=500,~ d=200$ & (d) $n=500,~ d=1000$ \\
        \begin{minipage}[t]{0.45\linewidth}
        \centering
        \includegraphics[width=1.0\columnwidth]{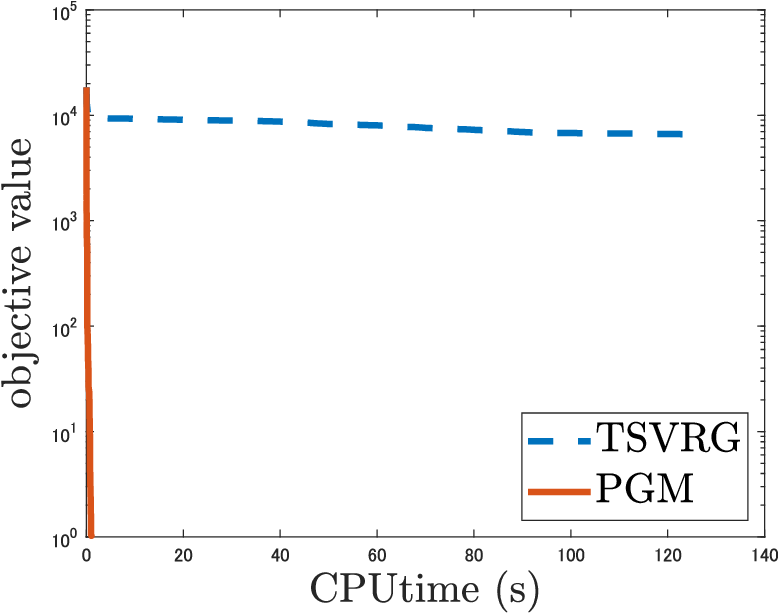}
        \end{minipage} &
        \begin{minipage}[t]{0.45\linewidth}
        \centering
        \includegraphics[width=1.0\columnwidth]{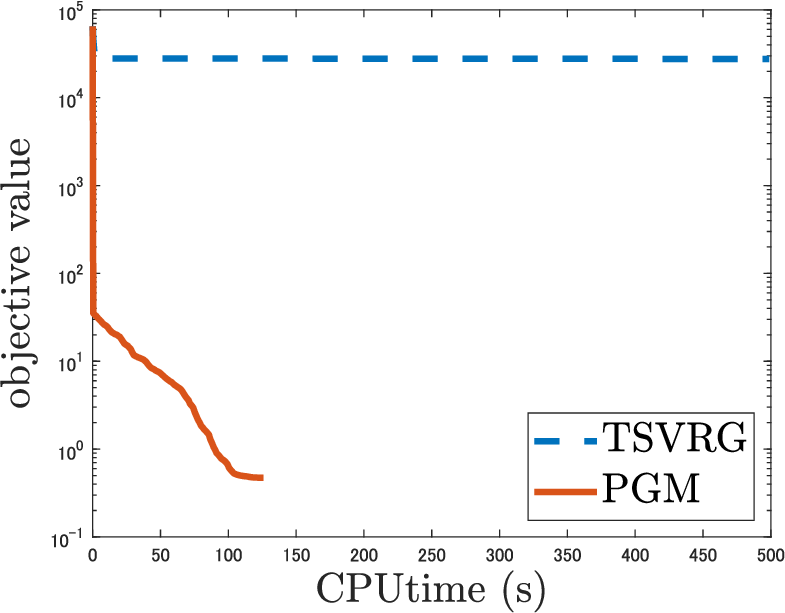}
        \end{minipage} \\
    \end{tabular}
    \caption{Convergence behaviors of the PGM and TSVRG.}
    \label{fig:vs-TSVRG}
\end{figure}

\subsection{Computational cost of PGM}
This subsection provides the variation in computational cost of the PGM with increasing the variables.
The maximum number of iterations was set as $t_{\max}=10^6$.
To reduce the influence of outliers, $3$ samples were randomly chosen and the solution of the LASSO for them was employed in the initial point $(\beta_0^{(0)},\beta^{(0)})$.
The initial point $\alpha^{(0)}$ was the same as in the previous subsection.
The average CPUtime for $10$ repetitions is displayed in Figure \ref{fig:time} and error bar indicates the standard deviation.
Figure \ref{fig:time} (a) shows that the computational cost grows linearly when the number of variables is increased while keeping the ratio of $n$ to $d$ constant.
On the other hand, we see from Figure \ref{fig:time} (b) that the computational cost increases nonlinearly when the dimension of the variables is increased while $n$ is kept constant.
Figure \ref{fig:time} (c) implies that the contribution of the condition number of $X^\top X$ is greater than that of the number of variables.

\begin{figure}[H]
    \begin{center}
    \begin{tabular}{c}
        (a) $d=2n$ \\
        \includegraphics[width=0.7\columnwidth]{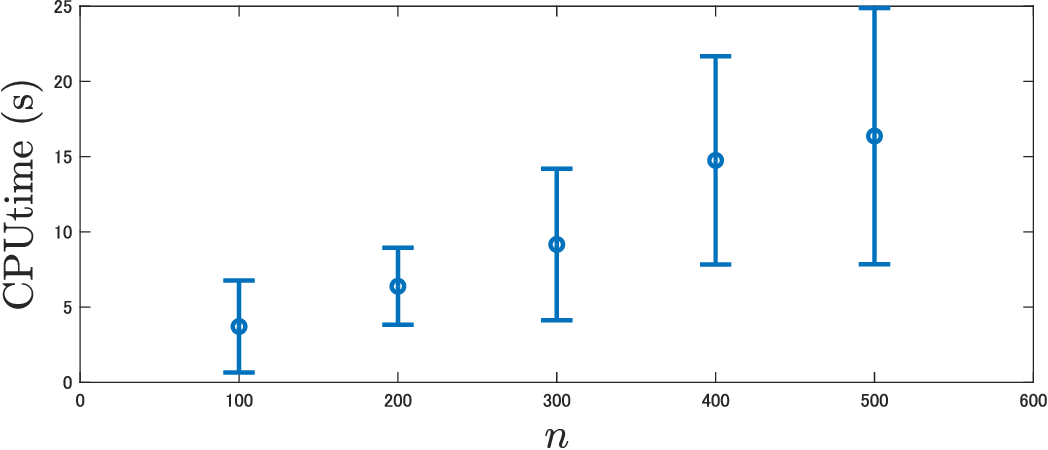} \\
        (b) $n=100$ \\
        \includegraphics[width=0.7\columnwidth]{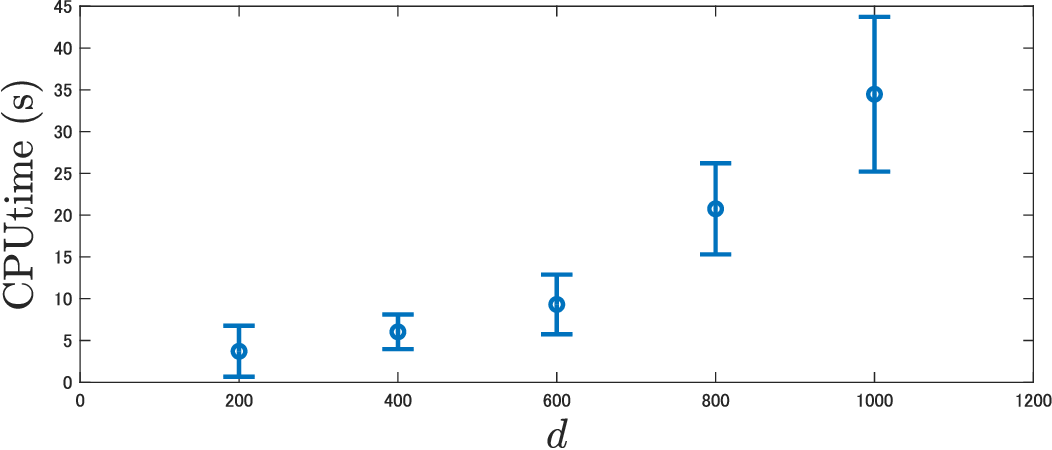} \\
        (c) $d=200$ \\
        \includegraphics[width=0.7\columnwidth]{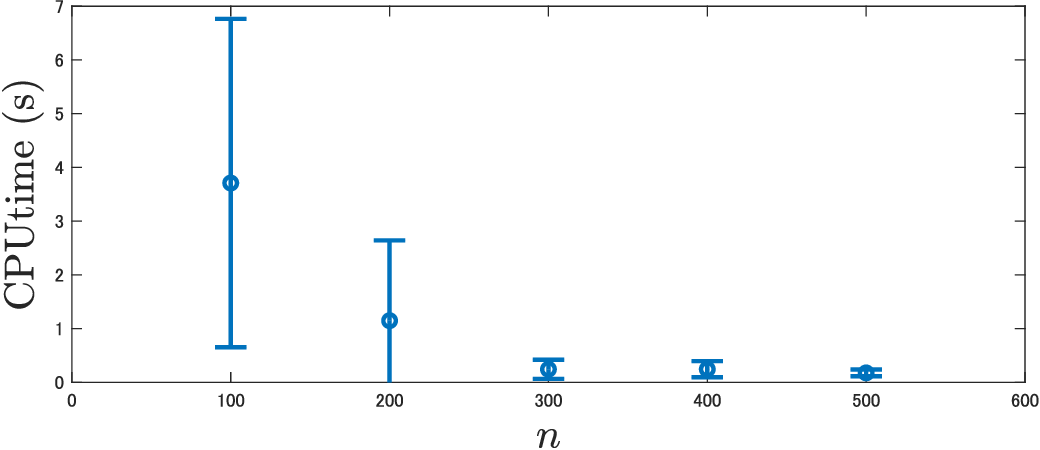} \\
    \end{tabular}
    \caption{Variation in CPUtime of the PGM due to increase in the number of variables.}
    \label{fig:time}
    \end{center}
\end{figure}

\subsection{Comparison with FAST-SLTS}
Since the SLTS is a nonconvex optimization problem, for practical purposes the algorithm is run from several initial points and the best solution is selected from among their outputs.
\citet{alfons2013sparse} proposed to run only $2$ iterations of the FAST-SLTS from $500$ initial points and then leave only $10$ good solutions from which to restart the iterations.
This subsection runs the PGM from several initial points and compares it to the FAST-SLTS.
The choice of initial points was the same as in the previous subsection and $t_{\max}=10^5$.
We compare our method with the FAST-SLTS by
\begin{align}
    \textbf{CPUtime ratio} &\coloneqq \frac{\textbf{CPUtime of PGM}}{\textbf{CPUtime of FAST-SLTS}},\\
    \textbf{objective value ratio} &\coloneqq \frac{\textbf{objective value of PGM}}{\textbf{objective value of FAST-SLTS}}.
\end{align}
The geometric mean, minimum, and maximum of the CPUtime ratio and objective value ratio for $10$ repetitions is summarized in Table \ref{table:vs-FAST-SLTS}.
With more than $5$ initial points, the objective value obtained by the PGM was only at most $2\%$ different from that obtained by the FAST-SLTS on average.
On the other hand, the computational cost of the PGM with $5$ initial points was about $5\%$ of that of the FAST-SLTS on average, indicating that solutions of similar quality can be obtained more quickly by our approach.

\begin{table}[H]
    \centering
    \caption{Summary of CPUtime ratio and objective value ratio.}
    \begin{tabular}{c||c|c}
        & CPUtime ratio & objective value ratio \\
        & geometric mean (min., max.) & geometric mean (min., max.) \\ \hline\hline
        $1$ initial point & 0.007 (0.002, 0.023) & 1.269 (1.007, 2.062) \\
        $5$ initial points & 0.048 (0.006, 0.101) & \textbf{1.018} (0.981, 1.071) \\
        $10$ initial points & 0.080 (0.012, 0.152) & \textbf{1.018} (0.983, 1.066) \\
        $20$ initial points & 0.154 (0.023, 0.402) & \textbf{1.007} (0.993, 1.030) \\
        $30$ initial points & 0.226 (0.029, 0.402) & \textbf{1.002} (0.986, 1.018) \\
    \end{tabular}
    \label{table:vs-FAST-SLTS}
\end{table}

\section{Concluding Remarks}\label{sec:conclusion}
This paper proposed the PGM-based algorithm for the trimmed-regularized reformulation of the SLTS.
Proposed method offers strong convergence properties.
In addition, it has the potential for expansion to nonlinear models.
Numerical experiments demonstrated the efficiency of our method.
Extension of our approach to the trimmed likelihood estimation \citep{hadi1997maximum} is an important task for the future.

\section*{Acknowledgments}
We would like to thank Akifumi Okuno and Hironori Fujisawa for the their valuable comments.

\appendix

\vspace{2em}
{\LARGE\bf\noindent Appendix}

\section{Proofs}\label{sec:proofs}
Before the proofs, some notions of stationarity are summarized.
Let $\phi$ be $(-\infty,\infty]$-valued function.
For $\Tilde{\xi}\in\dom\phi=\{\xi\mid\phi(\xi)<\infty\}$,
\begin{equation}
    \widehat{\partial}\phi(\Tilde{\xi}) \coloneqq \left\{g~\middle|~\liminf_{\xi\to\Tilde{\xi}}\frac{\phi(\xi)-\phi(\Tilde{\xi})-\innerprod{g}{\xi-\Tilde{\xi}}}{\|\xi-\Tilde{\xi}\|_2}\ge0\right\}
\end{equation}
is called the Fr\'echet subdifferential of $\phi$ at $\Tilde{\xi}$ and
\begin{equation}
    \partial\phi(\Tilde{\xi}) \coloneqq \left\{g~\middle|~\exists\{\xi^{(t)}\},\{g^{(t)}\}~ \mbox{s.t.}~ \xi^{(t)}\to\Tilde{\xi},~ \phi(\xi^{(t)})\to\phi(\Tilde{\xi}),~ g^{(t)}\to g,~ g^{(t)}\in\widehat{\partial}\phi(\Tilde{\xi}^{(t)})\right\}
\end{equation}
is known as the limiting subdifferential of $\phi$ at $\Tilde{\xi}$.
We call a point $\Tilde{\xi}\in\dom\phi$ satisfying $0\in\partial\phi(\Tilde{\xi})$ a l-stationary point of $\min_\xi\phi(\xi)$.
If $\phi$ is of the form $\phi=\phi_1+\phi_2$ where $\phi_1$ is continuously differentiable, it holds that $\widehat{\partial}\phi(\Tilde{\xi})=\nabla\phi_1(\Tilde{\xi})+\widehat{\partial}\phi_2(\Tilde{\xi})$ and $\partial\phi(\Tilde{\xi})=\nabla\phi_1(\Tilde{\xi})+\partial\phi_2(\Tilde{\xi})$ for $\Tilde{\xi}\in\dom\phi_2$ \citep[Exercise 8.8]{rockafellar2009variational}.
A real-valued function $\phi$ is directionally differentiable if the directional derivative of $\phi$ at $\xi$ in direction $d$
\begin{align}
\phi'(\xi;d)\coloneqq\lim_{\tau\searrow 0}\frac{\phi(\xi+\tau d)-\phi(\xi)}{\tau}
\end{align}
exists for any $\xi$ and $d$.
Note that the trimmed squares function $T_h$ is directionally differentiable \citep[Theorem 3.4]{delfour2019introduction}.
A point $\Tilde{\xi}$ is called a d-stationary point of $\min_\xi\phi(\xi)$ if $\phi'(\Tilde{\xi};d)\ge0$ holds for all $d$.
It is easy to see that any local minimizer $\xi^*$ of $\min_\xi\phi(\xi)$ satisfies $0\in\widehat{\partial}\phi(\xi^*)$ and $0\in\partial\phi(\xi^*)$, and is a d-stationary point.
See \citet{li2018calculus} for the definition of the KL function and the KL exponent.

\subsection{Proof of Theorem \ref{thm:linear-convergence}}
To prove Theorem \ref{thm:linear-convergence}, we present the following lemmas.

\begin{lemma}\label{lemma:boundedness-eta}
Let $\mathrm{Lip}_{\nabla l}>0$ be a Lipschitz constant of $\nabla l$.
It holds that
\begin{equation}
    \eta_{\beta_0}^{(t)},\eta_\beta^{(t)},\eta_\alpha^{(t)} \le \max\{c_1\mathrm{Lip}_{\nabla l}/(1-c_2),\overline{\eta}\}.
\end{equation}
\end{lemma}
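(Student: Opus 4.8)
The plan is to exploit the standard descent-lemma argument underlying the linesearch (backtracking) scheme, showing that once the inverse stepsize is large enough, the acceptance criterion is automatically satisfied, so the inner \texttt{while} loop cannot inflate the parameters beyond a controlled threshold. First I would recall that, since $l$ is a convex quadratic, $\nabla l$ is globally Lipschitz with some constant $\mathrm{Lip}_{\nabla l}>0$, and hence the descent lemma gives
\begin{equation}
    l(\beta_0',\beta',\alpha') \le l(\beta_0,\beta,\alpha) + \innerprod{\nabla l(\beta_0,\beta,\alpha)}{(\beta_0',\beta',\alpha')-(\beta_0,\beta,\alpha)} + \frac{\mathrm{Lip}_{\nabla l}}{2}\|(\beta_0',\beta',\alpha')-(\beta_0,\beta,\alpha)\|_2^2
\end{equation}
for any two points. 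I would then use the fact that $(\beta_0^{(t+1)},\beta^{(t+1)},\alpha^{(t+1)})$ is, by construction \eqref{eq:intercept-update}--\eqref{eq:absorber-update}, a minimizer of the model in the PGM subproblem, so plugging $(\beta_0^{(t)},\beta^{(t)},\alpha^{(t)})$ itself into that subproblem as a feasible competitor yields the inequality
\begin{equation}
    \innerprod{\nabla l^{(t)}}{(\beta_0^{(t+1)},\beta^{(t+1)},\alpha^{(t+1)})-(\beta_0^{(t)},\beta^{(t)},\alpha^{(t)})} + \frac{1}{2}\|(\beta_0^{(t+1)},\beta^{(t+1)},\alpha^{(t+1)})-(\beta_0^{(t)},\beta^{(t)},\alpha^{(t)})\|_{\eta^{(t)}}^2 + \tfrac12 T_h(\alpha^{(t+1)}) + \lambda\|\beta^{(t+1)}\|_1 \le \tfrac12 T_h(\alpha^{(t)}) + \lambda\|\beta^{(t)}\|_1 .
\end{equation}

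Next I would combine these two displays: add $\tfrac12 T_h(\alpha^{(t+1)})+\lambda\|\beta^{(t+1)}\|_1$ to the descent inequality to recover $L(\beta_0^{(t+1)},\beta^{(t+1)},\alpha^{(t+1)})$ on the left, substitute the subproblem bound for the inner-product term, and obtain
\begin{equation}
    L(\beta_0^{(t+1)},\beta^{(t+1)},\alpha^{(t+1)}) \le L(\beta_0^{(t)},\beta^{(t)},\alpha^{(t)}) - \tfrac12\|(\beta_0^{(t+1)},\beta^{(t+1)},\alpha^{(t+1)})-(\beta_0^{(t)},\beta^{(t)},\alpha^{(t)})\|_{\eta^{(t)}}^2 + \tfrac{\mathrm{Lip}_{\nabla l}}{2}\|(\beta_0^{(t+1)},\beta^{(t+1)},\alpha^{(t+1)})-(\beta_0^{(t)},\beta^{(t)},\alpha^{(t)})\|_2^2 .
\end{equation}
Since every component of $\eta^{(t)}$ is at least the current value being tested, one has $\|\cdot\|_{\eta^{(t)}}^2 \ge \min_j \eta_j^{(t)} \|\cdot\|_2^2$, so the right-hand side is bounded above by $L^{(t)} - \tfrac12(\min_j\eta_j^{(t)} - \mathrm{Lip}_{\nabla l})\|\cdot\|_2^2$. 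A short comparison with the acceptance test $L^{(t+1)} \le L^{(t)} - \tfrac{c_2}{2}\|\cdot\|_{\eta^{(t)}}^2 \le L^{(t)} - \tfrac{c_2}{2}\min_j\eta_j^{(t)}\|\cdot\|_2^2$ shows that the test is met as soon as $\min_j\eta_j^{(t)} - \mathrm{Lip}_{\nabla l} \ge c_2 \min_j\eta_j^{(t)}$, i.e. as soon as $\min_j\eta_j^{(t)} \ge \mathrm{Lip}_{\nabla l}/(1-c_2)$.

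Finally I would close the argument on the size of the accepted stepsizes. Before entering the loop, each $\eta_j^{(t)}$ lies in $[\underline{\eta},\overline{\eta}]$; each unsuccessful trial multiplies all three by $c_1$. If at some point $\min_j\eta_j^{(t)}$ first reaches or exceeds $\mathrm{Lip}_{\nabla l}/(1-c_2)$, the loop terminates, and just before that final multiplication $\min_j\eta_j^{(t)}$ was below that threshold, so after multiplying by $c_1$ it is at most $c_1\mathrm{Lip}_{\nabla l}/(1-c_2)$; since all three variables are scaled by the same factor $c_1$ at each step, each $\eta_j^{(t)}$ is likewise at most $c_1\mathrm{Lip}_{\nabla l}/(1-c_2)$. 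Combining with the alternative that the loop never triggers (in which case $\eta_j^{(t)}\le\overline{\eta}$) gives the claimed bound $\eta_{\beta_0}^{(t)},\eta_\beta^{(t)},\eta_\alpha^{(t)} \le \max\{c_1\mathrm{Lip}_{\nabla l}/(1-c_2),\overline{\eta}\}$, and in particular the \texttt{while} loop is finite. The only slightly delicate point is bookkeeping the fact that the three stepsize parameters are scaled jointly, so that the per-variable bound follows from the bound on their minimum; everything else is the routine descent-lemma/backtracking calculation.
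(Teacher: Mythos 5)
Your proposal follows the paper's proof essentially step for step: the descent lemma for $l$, the subproblem-optimality inequality with $(\beta_0^{(t)},\beta^{(t)},\alpha^{(t)})$ as competitor, their combination into a sufficient-decrease estimate, and the backtracking argument that the last rejected trial must lie below the threshold. Two bookkeeping points deserve attention, though. First, in your ``short comparison'' you replace the acceptance threshold $L^{(t)}-\frac{c_2}{2}\|\cdot\|_{\eta^{(t)}}^2$ by the \emph{larger} quantity $L^{(t)}-\frac{c_2}{2}\min_j\eta_j^{(t)}\|\cdot\|_2^2$; verifying that $L^{(t+1)}$ is below the latter does not verify the test. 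The estimate should be run the other way: write $-\frac12\|\cdot\|_{\eta^{(t)}}^2=-\frac{c_2}{2}\|\cdot\|_{\eta^{(t)}}^2-\frac{1-c_2}{2}\|\cdot\|_{\eta^{(t)}}^2$ and apply $\|\cdot\|_{\eta^{(t)}}^2\ge\min_j\eta_j^{(t)}\|\cdot\|_2^2$ only to the second term; this yields exactly your sufficient condition $\min_j\eta_j^{(t)}\ge\mathrm{Lip}_{\nabla l}/(1-c_2)$, so the slip is repairable in one line. Second, your closing step deduces a bound on \emph{each} $\eta_j^{(t)}$ from the bound on $\min_j\eta_j^{(t)}$ by appealing to the joint scaling by $c_1$; joint scaling preserves the ratios of the three parameters, not their equality, so when the initial values differ (as with the Barzilai--Borwein rule used in the experiments) the other two components can exceed $c_1\mathrm{Lip}_{\nabla l}/(1-c_2)$ by a factor up to $\overline{\eta}/\underline{\eta}$. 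The paper's own proof makes the same leap (rejection of $\eta^{(t)}/c_1$ only forces \emph{one} component below $\mathrm{Lip}_{\nabla l}/(1-c_2)$), so this is a shared imprecision rather than a defect peculiar to your argument; the lemma remains valid with the constant $\max\{c_1(\overline{\eta}/\underline{\eta})\mathrm{Lip}_{\nabla l}/(1-c_2),\overline{\eta}\}$, and any finite bound suffices for its use in Theorem 3.1.
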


\begin{proof}
Since $\nabla l$ is Lipschitz continuous, it follows from the well-known descent lemma \citep[Lemma 5.7]{beck2017first} that
\begin{align}
    l(\beta_0^{(t+1)},\beta^{(t+1)},\alpha^{(t+1)}) &\le l(\beta_0^{(t)},\beta^{(t)},\alpha^{(t)}) + \innerprod{\nabla l(\beta_0^{(t)},\beta^{(t)},\alpha^{(t)})}{(\beta_0^{(t+1)},\beta^{(t+1)},\alpha^{(t+1)})-(\beta_0^{(t)},\beta^{(t)},\alpha^{(t)})}\\
    &\qquad + \frac{\mathrm{Lip}_{\nabla l}}{2}\|(\beta_0^{(t+1)},\beta^{(t+1)},\alpha^{(t+1)})-(\beta_0^{(t)},\beta^{(t)},\alpha^{(t)})\|_2^2.
\end{align}
On the other hand, as $(\beta_0^{(t+1)},\beta^{(t+1)},\alpha^{(t+1)})$ is optimal to the subproblem, we have
\begin{align}
    &\innerprod{\nabla l(\beta_0^{(t)},\beta^{(t)},\alpha^{(t)})}{(\beta_0^{(t+1)},\beta^{(t+1)},\alpha^{(t+1)})} + \frac{1}{2}\|(\beta_0^{(t+1)},\beta^{(t+1)},\alpha^{(t+1)})-(\beta_0^{(t)},\beta^{(t)},\alpha^{(t)})\|_{\eta^{(t)}}^2\\
    &\qquad + \frac{1}{2}T_h(\alpha^{(t+1)}) + \lambda\|\beta^{(t+1)}\|_1\\
    &\le \innerprod{\nabla l(\beta_0^{(t)},\beta^{(t)},\alpha^{(t)})}{(\beta_0^{(t)},\beta^{(t)},\alpha^{(t)})} + \frac{1}{2}T_h(\alpha^{(t)}) + \lambda\|\beta^{(t)}\|_1.
\end{align}
Combining the rearranged above and the descent lemma yields
\begin{align}
    &L(\beta_0^{(t+1)},\beta^{(t+1)},\alpha^{(t+1)})\\
    &\le L(\beta_0^{(t)},\beta^{(t)},\alpha^{(t)}) + \frac{\mathrm{Lip}_{\nabla l}}{2}\|(\beta_0^{(t+1)},\beta^{(t+1)},\alpha^{(t+1)})-(\beta_0^{(t)},\beta^{(t)},\alpha^{(t)})\|_2^2\\
    &\qquad - \frac{1}{2}\|(\beta_0^{(t+1)},\beta^{(t+1)},\alpha^{(t+1)})-(\beta_0^{(t)},\beta^{(t)},\alpha^{(t)})\|_{\eta^{(t)}}^2\\
    &\le L(\beta_0^{(t)},\beta^{(t)},\alpha^{(t)}) - \frac{c_2}{2}\|(\beta_0^{(t+1)},\beta^{(t+1)},\alpha^{(t+1)})-(\beta_0^{(t)},\beta^{(t)},\alpha^{(t)})\|_{\eta^{(t)}}^2
\end{align}
provided that $\eta_{\beta_0}^{(t)},\eta_\beta^{(t)},\eta_\alpha^{(t)}\ge\mathrm{Lip}_{\nabla l}/(1-c_2)$ holds.
If the inner loop is executed at least once, the acceptance criterion does not holds for $(\eta_{\beta_0}^{(t)}/c_1,\eta_\beta^{(t)}/c_1,\eta_\alpha^{(t)}/c_1)$ and hence $\eta_{\beta_0}^{(t)},\eta_\beta^{(t)},\eta_\alpha^{(t)}<c_1\mathrm{Lip}_{\nabla l}/(1-c_2)$.
Otherwise, since the initial one is accepted, it holds that $\eta_{\beta_0}^{(t)},\eta_\beta^{(t)},\eta_\alpha^{(t)} \le \overline{\eta}$.
Consequently, we have the desired result.
\end{proof}

\begin{lemma}\label{lemma:level-boundedness}
For any $c\in\mathbb{R}$, the set $\mathcal{L}\coloneqq\{(\beta_0,\beta,\alpha)\mid L(\beta_0,\beta,\alpha)\le c\}$ is bounded.
\end{lemma}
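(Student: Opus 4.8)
\textbf{Proof plan for Lemma \ref{lemma:level-boundedness}.}

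The plan is to show that each block of variables $\alpha$, $\beta$, and $\beta_0$ stays bounded on the sublevel set $\mathcal{L}$, arguing in that order. First I would deal with $\alpha$. On $\mathcal{L}$ we have $\frac{1}{2}T_h(\alpha)\le L(\beta_0,\beta,\alpha)\le c$ (since the other two terms $\frac12\|\cdot\|_2^2$ and $\lambda\|\beta\|_1$ are nonnegative), so $T_h(\alpha)\le 2c$. By definition $T_h(\alpha)=\alpha_{\langle1\rangle}^2+\cdots+\alpha_{\langle h\rangle}^2$ is the sum of the $h$ \emph{smallest} squared entries of $\alpha$; this controls only the small entries, not all of $\alpha$. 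So $\alpha$ alone need not be bounded — but the key observation is that the entries of $\alpha$ that are allowed to be large are exactly $n-h$ of them, and for those indices the residual term must compensate. Concretely, writing $r\coloneqq y-\beta_0\bm{1}-X\beta$, the first term is $\frac12\|r-\alpha\|_2^2\le c$, so $\|r-\alpha\|_2\le\sqrt{2c}$.

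The main step is then to bound $\|r\|_2$ (equivalently $\|\beta_0\bm{1}+X\beta - y\|_2$) using the trimmed bound on $\alpha$. Fix $(\beta_0,\beta,\alpha)\in\mathcal{L}$ and let $\Lambda$ be a set of $h$ indices achieving $\sum_{i\in\Lambda}\alpha_i^2 = T_h(\alpha)\le 2c$. Restricting to coordinates in $\Lambda$, the triangle inequality gives $\|r_\Lambda\|_2 \le \|r_\Lambda - \alpha_\Lambda\|_2 + \|\alpha_\Lambda\|_2 \le \|r-\alpha\|_2 + \sqrt{T_h(\alpha)} \le \sqrt{2c}+\sqrt{2c} = 2\sqrt{2c}$. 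Thus for every point in $\mathcal{L}$ there is \emph{some} size-$h$ index set $\Lambda$ on which $\|r_\Lambda\|_2$ is bounded by the absolute constant $2\sqrt{2c}$. Since there are only $\binom{n}{h}$ possible such sets, it suffices to show, for each fixed $\Lambda$, that the constraint $\|(\beta_0\bm{1}+X\beta-y)_\Lambda\|_2\le 2\sqrt{2c}$ together with $\lambda\|\beta\|_1\le c$ forces $(\beta_0,\beta)$ into a bounded set; then $\mathcal{L}$ is contained in a finite union of bounded sets and hence bounded, and finally $\|\alpha\|_2\le\|r\|_2+\|r-\alpha\|_2$ is bounded too once $(\beta_0,\beta)$ is.

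For the per-$\Lambda$ bound: $\lambda\|\beta\|_1\le c$ already gives $\|\beta\|_2\le\|\beta\|_1\le c/\lambda$, so $\beta$ is bounded outright (this is where the $\ell_1$ penalty does the work). It remains to bound $\beta_0$. From $\|(\beta_0\bm{1}+X\beta-y)_\Lambda\|_2\le 2\sqrt{2c}$ we get $|\Lambda|^{1/2}|\beta_0| = \|\beta_0\bm{1}_\Lambda\|_2 \le 2\sqrt{2c} + \|(X\beta-y)_\Lambda\|_2 \le 2\sqrt{2c} + \|X\|_2\,(c/\lambda) + \|y\|_2$, and since $|\Lambda|=h\ge1$ this bounds $|\beta_0|$. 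The hard part — really the only subtlety — is recognizing that $T_h(\alpha)$ bounds $\alpha$ only on a size-$h$ subset and that one must combine this with the quadratic residual term on \emph{that same subset} before invoking the penalty; once the right $\Lambda$ is selected the rest is elementary triangle-inequality estimates and a finite union over the $\binom{n}{h}$ choices of $\Lambda$. I note that in the linear-model case the argument is especially clean because $\beta$ is bounded directly by the $\ell_1$ term; no rank or full-column-rank assumption on $X$ is needed.
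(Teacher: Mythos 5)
Your proof is correct and follows essentially the same strategy as the paper: decompose $\mathcal{L}$ into a finite union over the $\binom{n}{h}$ index sets $\Lambda$, extract the three bounds $T_h(\alpha)\le 2c$, $\|r-\alpha\|_2\le\sqrt{2c}$, $\lambda\|\beta\|_1\le c$ from nonnegativity of each term, and combine by triangle inequalities. The only cosmetic difference is at the end: the paper bounds $(\beta_0,\alpha_{\Lambda^c})$ jointly via an injectivity constant $\mu$ for the map $(\beta_0,\alpha_{\Lambda^c})\mapsto\beta_0\bm{1}+E_{\Lambda^c}\alpha_{\Lambda^c}$, whereas you bound $\beta_0$ directly on the $\Lambda$-coordinates (using $h\ge1$) and then recover $\alpha$ from $\|\alpha\|_2\le\|r\|_2+\|r-\alpha\|_2$, which is equally valid and yields explicit constants.
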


\begin{proof}
Since it is obvious when $\mathcal{L}$ is an empty set, we consider only when $\mathcal{L}$ is nonempty.
In this case, $c\ge0$.
Letting
\begin{equation}\label{eq:L-Lambda}
    L_\Lambda(\beta_0,\beta,\alpha) \coloneqq \frac{1}{2}\|y-\beta_0\bm{1}-X\beta-\alpha\|_2^2 + \frac{1}{2}\sum_{i\in\Lambda}\alpha_i^2 + \lambda\|\beta\|_1,
\end{equation}
the objective function $L$ is represented as
\begin{equation}\label{eq:min-representation-TS}
    L(\beta_0,\beta,\alpha) = \min_{\substack{\Lambda\subset[n]\\|\Lambda|=h}}L_\Lambda(\beta_0,\beta,\alpha).
\end{equation}
Since it holds that $\mathcal{L}=\cup_{\substack{\Lambda\subset[n]\\|\Lambda|=h}}\mathcal{L}_\Lambda$ where $\mathcal{L}_\Lambda\coloneqq\{(\beta_0,\beta,\alpha)\mid L_\Lambda(\beta_0,\beta,\alpha)\le c\}$, it is sufficient to show the boundedness for any $\mathcal{L}_\Lambda$.
Let $(\beta_0^*,\beta^*,\alpha^*)\in\mathcal{L}_\Lambda$.
As each term on the right hand side of \eqref{eq:L-Lambda} is nonnegative, we have
\begin{align}
    \|y-\beta_0^*\bm{1}-X\beta^*-\alpha^*\|_2 &\le \sqrt{2c},\label{eq:residual-bound}\\
    \|\alpha_\Lambda^*\|_2 &\le \sqrt{2c},\label{eq:alpha-Lambda-bound}\\
    \|\beta^*\|_1 &\le c/\lambda,\label{eq:beta-bound}
\end{align}
where $\alpha_\Lambda$ denotes the subvector corresponding to $\Lambda$ for $\alpha$.
Letting $E_\Lambda$ and $E_{\Lambda^c}$ be the $n\times h$ and $n\times(n-h)$ matrices satisfying $\alpha=E_\Lambda\alpha_\Lambda+E_{\Lambda^c}\alpha_{\Lambda^c}$ for any $\alpha$, where $\alpha_{\Lambda^c}$ is the subvector of $\alpha$ corresponding to $\Lambda^c$, the linear mapping $(\beta_0,\alpha_{\Lambda^c})\mapsto\beta_0\bm{1}+E_{\Lambda^c}\alpha_{\Lambda^c}$ is injective and hence there exists $\mu>0$ such that
\begin{equation}\label{eq:bounded-below}
    \mu\|(\beta_0,\alpha_{\Lambda^c})\|_2 \le \|\beta_0\bm{1}+E_{\Lambda^c}\alpha_{\Lambda^c}\|_2
\end{equation}
for any $(\beta_0,\alpha_{\Lambda^c})$.
Combining \eqref{eq:bounded-below} with \eqref{eq:residual-bound}--\eqref{eq:beta-bound} yields
\begin{align}
    \mu\|(\beta_0^*,\alpha_{\Lambda^c}^*)\|_2 &\le \|\beta_0^*\bm{1}+E_{\Lambda^c}\alpha_{\Lambda^c}^*\|_2\\
    &= \|y-X\beta^*-E_\Lambda\alpha_\Lambda^*-(y-\beta_0^*\bm{1}-X\beta^*-\alpha^*)\|_2\\
    &\le \|y\|_2 + \|X\beta^*\|_2 + \|E_\Lambda\alpha_\Lambda^*\|_2 + \|y-\beta_0^*\bm{1}-X\beta^*-\alpha^*\|_2\\
    &\le \|y\|_2 + \|X\|_2\|\beta^*\|_2 + \|\alpha_\Lambda^*\|_2 + \|y-\beta_0^*\bm{1}-X\beta^*-\alpha^*\|_2\\
    &\le \|y\|_2 + c\|X\|_2/\lambda + \sqrt{2c} + \sqrt{2c},
\end{align}
where $\|X\|_2$ is the spectral norm of $X$.
The above, \eqref{eq:alpha-Lambda-bound}, and \eqref{eq:beta-bound} imply the boundedness of $\mathcal{L}_\Lambda$.
\end{proof}

\begin{lemma}\label{lemma:l-opt<=>d-stat}
Any d-stationary point of \eqref{problem:STRLS} is a local minimizer of \eqref{problem:STRLS}.
\end{lemma}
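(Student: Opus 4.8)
The plan is to exploit the representation $L=\min_{|\Lambda|=h}L_\Lambda$ from \eqref{eq:min-representation-TS}, in which each $L_\Lambda$ defined by \eqref{eq:L-Lambda} is a \emph{convex} function (a convex quadratic plus $\lambda\|\cdot\|_1$). Fix a d-stationary point $z^*=(\beta_0^*,\beta^*,\alpha^*)$ of \eqref{problem:STRLS} and let $\mathcal{A}\coloneqq\{\Lambda\subset[n]\mid|\Lambda|=h,\ L_\Lambda(z^*)=L(z^*)\}$ collect the index sets attaining the minimum in \eqref{eq:min-representation-TS} at $z^*$; this set is nonempty since the minimum is over finitely many $\Lambda$.

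The first step is to show that $z^*$ is a global minimizer of $L_\Lambda$ for every $\Lambda\in\mathcal{A}$. Since $L\le L_\Lambda$ everywhere and $L(z^*)=L_\Lambda(z^*)$ for $\Lambda\in\mathcal{A}$, for each direction $d$ and each $\tau>0$ one has $\tau^{-1}(L(z^*+\tau d)-L(z^*))\le\tau^{-1}(L_\Lambda(z^*+\tau d)-L_\Lambda(z^*))$; letting $\tau\searrow0$ (the directional derivatives exist, since the quadratic term is $C^1$, $\|\cdot\|_1$ is convex, $T_h$ is directionally differentiable, and a finite sum of directionally differentiable functions is directionally differentiable) yields $0\le L'(z^*;d)\le L_\Lambda'(z^*;d)$, where the first inequality is the d-stationarity of $z^*$. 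Because $L_\Lambda$ is convex, having $L_\Lambda'(z^*;d)\ge0$ for all $d$ forces $z^*$ to minimize $L_\Lambda$ globally.

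The second step is a localization argument. As there are only finitely many admissible $\Lambda$ and each $L_\Lambda$ is continuous, every $\Lambda\notin\mathcal{A}$ satisfies $L_\Lambda(z^*)>L(z^*)$, so there is a neighborhood $U$ of $z^*$ on which $L_\Lambda(z)>L(z^*)$ for all $\Lambda\notin\mathcal{A}$ simultaneously. On $U$, for $\Lambda\in\mathcal{A}$ we have $L_\Lambda(z)\ge L_\Lambda(z^*)=L(z^*)$ by the first step, while for $\Lambda\notin\mathcal{A}$ we have $L_\Lambda(z)>L(z^*)$; taking the minimum over $\Lambda$ gives $L(z)=\min_{|\Lambda|=h}L_\Lambda(z)\ge L(z^*)$ for every $z\in U$, i.e., $z^*$ is a local minimizer of \eqref{problem:STRLS}.

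I expect the only delicate point to be the localization: one must ensure that, sufficiently close to $z^*$, the ``inactive'' pieces $L_\Lambda$ with $\Lambda\notin\mathcal{A}$ stay strictly above the level $L(z^*)$, so that the behaviour of $L$ near $z^*$ is governed solely by the active convex pieces, each of which is already globally minimized at $z^*$. This is also precisely where the structure of the STRLS enters: the nonconvex, nonsmooth term $T_h$ contributes to $L$ only through a finite minimum of convex functions, which is what upgrades ``nonnegative directional derivatives'' to ``global minimum of that piece'', and hence to local optimality of $z^*$.
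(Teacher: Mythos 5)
Your proof is correct and rests on exactly the same idea as the paper's: the representation \eqref{eq:min-representation-TS} of $L$ as a pointwise minimum of finitely many convex functions $L_\Lambda$. The paper simply cites Corollary~1 of \citet{yagishita2022exact} at this point, whereas you have given a self-contained proof of that corollary (active pieces are globally minimized at the d-stationary point, inactive pieces are locally bounded away from the level $L(z^*)$), so the two arguments are in substance identical.
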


\begin{proof}
According to \eqref{eq:min-representation-TS}, the objective function $L$ is represented as the pointwise minimum of finitely many convex functions.
Applying Corollary 1 in \citep{yagishita2022exact} yields the desired result.
\end{proof}

\begin{lemma}\label{lemma:KL}
The objective function of \eqref{problem:STRLS} is a KL function and its KL exponent is $1/2$.
\end{lemma}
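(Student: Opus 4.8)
The strategy is to exploit the representation \eqref{eq:min-representation-TS} of $L$ as a pointwise minimum of finitely many "nice" functions and then invoke the calculus rules for the KL exponent developed in \citet{li2018calculus}. First I would observe that each $L_\Lambda$ defined in \eqref{eq:L-Lambda} is a convex piecewise linear-quadratic function: it is the sum of the convex quadratic $\frac{1}{2}\|y-\beta_0\bm{1}-X\beta-\alpha\|_2^2 + \frac{1}{2}\sum_{i\in\Lambda}\alpha_i^2$ and the polyhedral convex function $\lambda\|\beta\|_1$. Such functions are known to be KL functions with exponent $1/2$; this follows because they are convex piecewise linear-quadratic (equivalently, their subdifferential graphs are piecewise polyhedral), so the relevant result from \citet{li2018calculus} (the KL exponent of a proper closed convex piecewise linear-quadratic function is $1/2$) applies directly to each $L_\Lambda$.

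Next I would pass from the finitely many $L_\Lambda$ to their pointwise minimum $L$. The key tool is the result in \citet{li2018calculus} stating that if $f = \min\{f_1,\dots,f_m\}$ where each $f_i$ is a KL function with exponent $\theta_i$, and the functions are continuous (or at least satisfy the mild regularity needed there), then $f$ is a KL function with exponent $\max_i\theta_i$ — provided one is careful at points where several $f_i$ achieve the minimum. Here all the $f_i = L_\Lambda$ are continuous real-valued functions and all exponents equal $1/2$, so the conclusion is that $L$ is a KL function with exponent $1/2$. I would need to check the precise hypotheses of that min-rule: typically one requires that at any point $\tilde z$ the active functions $\{L_\Lambda : L_\Lambda(\tilde z) = L(\tilde z)\}$ all share the claimed exponent and that $L$ is lower semicontinuous (here $L$ is continuous), which holds trivially since every $L_\Lambda$ has exponent $1/2$.

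The main obstacle I anticipate is citing the min-rule in exactly the form needed: some versions in the literature are stated for $f = \min_i f_i$ only under the assumption that the $f_i$ are KL with a \emph{uniform} exponent on a neighborhood and that one works near a stationary point, and one must verify that a point $\tilde z$ that is stationary for $L$ need not be stationary for each active $L_\Lambda$. The clean way around this is to use the subdifferential relation: at $\tilde z$ with active set $\mathcal{A}(\tilde z) = \{\Lambda : L_\Lambda(\tilde z) = L(\tilde z)\}$ one has $\partial L(\tilde z) \subseteq \bigcup_{\Lambda\in\mathcal{A}(\tilde z)}\partial L_\Lambda(\tilde z)$, so a KL inequality for each active $L_\Lambda$ at $\tilde z$ — which holds since each is KL with exponent $1/2$ at \emph{every} point of its domain, stationary or not — can be combined (taking the maximum of the finitely many moduli over a common neighborhood, and using continuity of $L$ and of each $L_\Lambda$ to match function values) to yield the KL inequality for $L$ with exponent $1/2$. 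This is essentially the content of \citet[Corollary 3.1 or the pointwise-minimum lemma]{li2018calculus}, and I would cite it accordingly; the remaining work is the routine bookkeeping of shrinking neighborhoods so that the active set does not grow.
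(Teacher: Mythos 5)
Your proposal is correct and follows essentially the same route as the paper: the paper's proof likewise uses the representation \eqref{eq:min-representation-TS} of $L$ as a pointwise minimum of the convex piecewise linear-quadratic functions $L_\Lambda$ and then invokes the corresponding result of \citet{li2018calculus} (Corollary 5.2 there) to conclude the KL exponent is $1/2$. Your additional discussion of the min-rule's hypotheses is just an unpacking of what that cited corollary already handles.
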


\begin{proof}
According to \eqref{eq:min-representation-TS}, the desired result follows from Corollary 5.2 in \citep{li2018calculus}.
\end{proof}

Lemma \ref{lemma:boundedness-eta} implies that the inner loop terminates after a finite number of iterations.
From Lemma \ref{lemma:l-opt<=>d-stat}, we see that the STRLS has no saddle points.

\begin{proof}[Proof of Theorem \ref{thm:linear-convergence}]
We first show that $\{(\beta_0^{(t)},\beta^{(t)},\alpha^{(t)})\}$ converges.
It follows from the acceptance criterion that
\begin{align}\label{eq:sufficient-decreasing}
\begin{split}
    L(\beta_0^{(t+1)},\beta^{(t+1)},\alpha^{(t+1)}) &\le L(\beta_0^{(t)},\beta^{(t)},\alpha^{(t)})-\frac{c_2}{2}\|(\beta_0^{(t+1)},\beta^{(t+1)},\alpha^{(t+1)})-(\beta_0^{(t)},\beta^{(t)},\alpha^{(t)})\|_{\eta^{(t)}}^2\\
    &\le L(\beta_0^{(t)},\beta^{(t)},\alpha^{(t)})-\frac{c_2\underline{\eta}}{2}\|(\beta_0^{(t+1)},\beta^{(t+1)},\alpha^{(t+1)})-(\beta_0^{(t)},\beta^{(t)},\alpha^{(t)})\|_2^2.
\end{split}
\end{align}
On the other hand, as $(\beta_0^{(t+1)},\beta^{(t+1)},\alpha^{(t+1)})$ is optimal to the subproblem, we have
\begin{equation}
    0 \in \nabla l(\beta_0^{(t)},\beta^{(t)},\alpha^{(t)}) + (\eta_{\beta_0}^{(t)}(\beta_0^{(t+1)}-\beta_0^{(t)}),\eta_\beta^{(t)}(\beta^{(t+1)}-\beta^{(t)}),\eta_\alpha^{(t)}(\alpha^{(t+1)}-\alpha^{(t)})) + \partial R(\beta_0^{(t+1)},\beta^{(t+1)},\alpha^{(t+1)}),
\end{equation}
where $R(\beta_0,\beta,\alpha)\coloneqq\frac{1}{2}T_h(\alpha)+\lambda\|\beta\|_1$.
Thus, the set
\begin{equation}
    \partial L(\beta_0^{(t+1)},\beta^{(t+1)},\alpha^{(t+1)})=\nabla l(\beta_0^{(t+1)},\beta^{(t+1)},\alpha^{(t+1)})+\partial R(\beta_0^{(t+1)},\beta^{(t+1)},\alpha^{(t+1)})
\end{equation}
contains
\begin{align}
    w^{(t+1)} &\coloneqq \nabla l(\beta_0^{(t+1)},\beta^{(t+1)},\alpha^{(t+1)}) - \nabla l(\beta_0^{(t)},\beta^{(t)},\alpha^{(t)})\\
    &\qquad - (\eta_{\beta_0}^{(t)}(\beta_0^{(t+1)}-\beta_0^{(t)}),\eta_\beta^{(t)}(\beta^{(t+1)}-\beta^{(t)}),\eta_\alpha^{(t)}(\alpha^{(t+1)}-\alpha^{(t)})).
\end{align}
From the Lipschitz continuity of $\nabla l$ and Lemma \ref{lemma:boundedness-eta}, we obtain
\begin{align}\label{eq:relative-error}
    \|w^{(t+1)}\|_2 \le \left(\mathrm{Lip}_{\nabla l}+\max\{c_1\mathrm{Lip}_{\nabla l}/(1-c_2),\overline{\eta}\}\right)\|(\beta_0^{(t+1)},\beta^{(t+1)},\alpha^{(t+1)})-(\beta_0^{(t)},\beta^{(t)},\alpha^{(t)})\|_2.
\end{align}
Since the objective function of \eqref{problem:STRLS} is a KL function according to Lemma \ref{lemma:KL}, if $\{(\beta_0^{(t)},\beta^{(t)},\alpha^{(t)})\}$ has an accumulation point, from Theorem 2.9 in \citep{attouch2013convergence} with \eqref{eq:sufficient-decreasing} and \eqref{eq:relative-error}, the whole sequence converges to the accumulation point.
The inequality \eqref{eq:sufficient-decreasing} implies that $\{L(\beta_0^{(t)},\beta^{(t)},\alpha^{(t)})\}$ is monotonically nonincreasing.
Consequently, $\{(\beta_0^{(t)},\beta^{(t)},\alpha^{(t)})\}\subset\{(\beta_0,\beta,\alpha)\mid L(\beta_0,\beta,\alpha)\le L(\beta_0^{(0)},\beta^{(0)},\alpha^{(0)})\}$ and hence, from Lemma \ref{lemma:level-boundedness}, the sequence is bounded.
Thus, $\{(\beta_0^{(t)},\beta^{(t)},\alpha^{(t)})\}$ has an accumulation point.
The limit point is denoted by $(\beta_0^*,\beta^*,\alpha^*)$.

Next, we show the local optimality of $(\beta_0^*,\beta^*,\alpha^*)$.
Since $(\beta_0^{(t+1)},\beta^{(t+1)},\alpha^{(t+1)})$ is optimal to the subproblem, it follows from Lemma \ref{lemma:boundedness-eta} that
\begin{align}
    &\innerprod{\nabla l(\beta_0^{(t)},\beta^{(t)},\alpha^{(t)})}{(\beta_0^{(t+1)},\beta^{(t+1)},\alpha^{(t+1)})} + \frac{\underline{\eta}}{2}\|(\beta_0^{(t+1)},\beta^{(t+1)},\alpha^{(t+1)})-(\beta_0^{(t)},\beta^{(t)},\alpha^{(t)})\|_2^2\\
    &\qquad + R(\beta_0^{(t+1)},\beta^{(t+1)},\alpha^{(t+1)})\\
    &\le\innerprod{\nabla l(\beta_0^{(t)},\beta^{(t)},\alpha^{(t)})}{(\beta_0^{(t+1)},\beta^{(t+1)},\alpha^{(t+1)})} + \frac{1}{2}\|(\beta_0^{(t+1)},\beta^{(t+1)},\alpha^{(t+1)})-(\beta_0^{(t)},\beta^{(t)},\alpha^{(t)})\|_{\eta^{(t)}}^2\\
    &\qquad + R(\beta_0^{(t+1)},\beta^{(t+1)},\alpha^{(t+1)})\\
    &\le \innerprod{\nabla l(\beta_0^{(t)},\beta^{(t)},\alpha^{(t)})}{(\beta_0^*,\beta^*,\alpha^*)+\tau d} + \frac{1}{2}\|(\beta_0^*,\beta^*,\alpha^*)+\tau d-(\beta_0^{(t)},\beta^{(t)},\alpha^{(t)})\|_{\eta^{(t)}}^2\\
    &\qquad + R((\beta_0^*,\beta^*,\alpha^*)+\tau d)\\
    &\le \innerprod{\nabla l(\beta_0^{(t)},\beta^{(t)},\alpha^{(t)})}{(\beta_0^*,\beta^*,\alpha^*)+\tau d} + \frac{\overline{\eta}}{2}\|(\beta_0^*,\beta^*,\alpha^*)+\tau d-(\beta_0^{(t)},\beta^{(t)},\alpha^{(t)})\|_2^2\\
    &\qquad + R((\beta_0^*,\beta^*,\alpha^*)+\tau d)
\end{align}
for all $d$ and $\tau$.
By taking the limit $t\to\infty$, we see from the continuity of $R$ that
\begin{equation}
    \tau\innerprod{\nabla l(\beta_0^*,\beta^*,\alpha^*)}{d} + \frac{\overline{\eta}\tau^2}{2}\|d\|_2^2 + R((\beta_0^*,\beta^*,\alpha^*)+\tau d) - R(\beta_0^*,\beta^*,\alpha^*) \ge 0
\end{equation}
Dividing both sides by $\tau$ and taking the limit $\eta\to0$ give
\begin{equation}
    L'((\beta_0^*,\beta^*,\alpha^*);d) = \innerprod{\nabla l(\beta_0^*,\beta^*,\alpha^*)}{d} + R'((\beta_0^*,\beta^*,\alpha^*);d) \ge 0,
\end{equation}
which implies that $(\beta_0^*,\beta^*,\alpha^*)$ is a d-stationary point of \eqref{problem:STRLS}.
According to Lemma \ref{lemma:l-opt<=>d-stat}, $(\beta_0^*,\beta^*,\alpha^*)$ is a local minimizer of \eqref{problem:STRLS}.

Lastly, since the KL exponent of the objective function of \eqref{problem:STRLS} is $1/2$ by Lemma \ref{lemma:KL}, applying Theorem 3.4 in \citep{frankel2015splitting} yields the linear convergence of $\{(\beta_0^{(t)},\beta^{(t)},\alpha^{(t)})\}$ and $\{L(\beta_0^{(t)},\beta^{(t)},\alpha^{(t)})\}$.
\end{proof}

\subsection{Proof of Theorem \ref{thm:subsequential-convergence}}
Although \citet{kanzow2022convergence} considers the subsequential convergence of the PGM, the standard $\ell_2$ norm is used rather than the weighted norm $\|\cdot\|_{\eta^{(t)}}$ there.
Fortunately, thanks to taking the initial values so that $\underline{\eta} \le \eta_\beta^{(t)},\eta_\alpha^{(t)} \le \overline{\eta}$, we obtain the following lemmas in the same way as in the proof of Lemma 3.1, Proposition 3.1, and Proposition 3.2 in \citet{kanzow2022convergence}.

\begin{lemma}\label{lemma:well-definedness}
Suppose the assumptions of Theorem \ref{thm:subsequential-convergence}.
If $(\beta^{(t)},\alpha^{(t)})$ is not an l-stationary point of \eqref{problem:general-TRLS}, the inner loop of Algorithm \ref{alg:PGM-nonlinear} at the iteration $t$ is finite.
\end{lemma}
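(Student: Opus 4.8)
The inner loop at iteration $t$ merely rescales the pair $(\eta_\beta^{(t)},\eta_\alpha^{(t)})$ by the common factor $c_1>1$, so it suffices to show that the acceptance criterion is met after finitely many rescalings. The plan is to argue by contradiction, adapting the argument of \citet{kanzow2022convergence} to the weighted norm $\|\cdot\|_{\eta^{(t)}}$. Suppose the inner loop at iteration $t$ runs forever; index its passes by $k$, write $z^{(t)}:=(\beta^{(t)},\alpha^{(t)})$ for the fixed current iterate and $z^{(t,k)}:=(\beta^{(t,k)},\alpha^{(t,k)})$ for the $k$th trial point obtained from \eqref{eq:cofficient-update-nonlinear}--\eqref{eq:absorber-update-nonlinear}, and set $R(\beta,\alpha):=\tfrac12 T_h(\alpha)+\lambda P(\beta)$, which is proper, lower semicontinuous and bounded below under the stated hypotheses (and $R(z^{(t)})<\infty$ since $z^{(t)}$ passed the previous acceptance test). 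The trial points are well defined (the $\alpha$-step by Proposition \ref{prop:prox-trimmed-squares}, the $\beta$-step because $P$ is lower semicontinuous and bounded below), the inverse stepsizes $\eta_\beta^{(t,k)},\eta_\alpha^{(t,k)}$ increase to $+\infty$, and their ratio is constant in $k$, so $\underline{\eta}^{(t,k)}:=\min\{\eta_\beta^{(t,k)},\eta_\alpha^{(t,k)}\}$ and $\overline{\eta}^{(t,k)}:=\max\{\eta_\beta^{(t,k)},\eta_\alpha^{(t,k)}\}$ satisfy $\overline{\eta}^{(t,k)}\le\kappa\,\underline{\eta}^{(t,k)}$ for a fixed $\kappa$, with $\underline{\eta}^{(t,k)}\to\infty$.

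First I would show that the trial points cluster at $z^{(t)}$. Optimality of $z^{(t,k)}$ for its subproblem, compared with $z^{(t)}$, gives
\[
\langle\nabla\Tilde{l}(z^{(t)}),z^{(t,k)}-z^{(t)}\rangle+\tfrac12\|z^{(t,k)}-z^{(t)}\|_{\eta^{(t,k)}}^2+R(z^{(t,k)})\le R(z^{(t)}).
\]
Bounding $R(z^{(t,k)})$ below by $\lambda\inf P$, using $\|\cdot\|_{\eta^{(t,k)}}^2\ge\underline{\eta}\,\|\cdot\|_2^2$ and Cauchy--Schwarz turns this into a quadratic inequality in $\|z^{(t,k)}-z^{(t)}\|_2$ whose solution set is bounded uniformly in $k$; hence every $z^{(t,k)}$ lies in a fixed compact convex ball $B\ni z^{(t)}$. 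Re-inserting this bound shows $\|z^{(t,k)}-z^{(t)}\|_{\eta^{(t,k)}}^2$ is bounded, hence $\underline{\eta}^{(t,k)}\|z^{(t,k)}-z^{(t)}\|_2^2$ is bounded, and since $\underline{\eta}^{(t,k)}\to\infty$ we get $z^{(t,k)}\to z^{(t)}$; the same inequality gives $\limsup_k R(z^{(t,k)})\le R(z^{(t)})$, so with lower semicontinuity $R(z^{(t,k)})\to R(z^{(t)})$.

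Next, since $\beta\mapsto f_\beta(X)$ is $C^1$, so is $\Tilde{l}$, hence $\nabla\Tilde{l}$ is uniformly continuous on the compact set $B$; letting $\omega$ be a nondecreasing modulus of continuity of $\nabla\Tilde{l}$ on $B$ (so $\omega(s)\downarrow 0$ as $s\downarrow 0$), the usual integral estimate yields $\Tilde{l}(v)\le\Tilde{l}(u)+\langle\nabla\Tilde{l}(u),v-u\rangle+\omega(\|v-u\|_2)\|v-u\|_2$ for $u,v\in B$. Taking $u=z^{(t)}$, $v=z^{(t,k)}$, adding $R(z^{(t,k)})$ and using the subproblem optimality to absorb the linear and $R$-terms into $-\tfrac12\|z^{(t,k)}-z^{(t)}\|_{\eta^{(t,k)}}^2$ gives
\[
\Tilde{L}(z^{(t,k)})\le\Tilde{L}(z^{(t)})-\tfrac12\|z^{(t,k)}-z^{(t)}\|_{\eta^{(t,k)}}^2+\omega(\|z^{(t,k)}-z^{(t)}\|_2)\|z^{(t,k)}-z^{(t)}\|_2,
\]
so, using $\|\cdot\|_{\eta^{(t,k)}}^2\ge\underline{\eta}^{(t,k)}\|\cdot\|_2^2$, the acceptance criterion holds at pass $k$ as soon as $z^{(t,k)}=z^{(t)}$ or $\omega(\|z^{(t,k)}-z^{(t)}\|_2)\le\tfrac{1-c_2}{2}\,\underline{\eta}^{(t,k)}\|z^{(t,k)}-z^{(t)}\|_2$. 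Now I would split into two cases. If $\liminf_k\underline{\eta}^{(t,k)}\|z^{(t,k)}-z^{(t)}\|_2>0$, then since $\omega(\|z^{(t,k)}-z^{(t)}\|_2)\to 0$ the sufficient condition above is met for some (large) $k$, so the loop terminates --- a contradiction. Otherwise there is a subsequence along which $\underline{\eta}^{(t,k)}\|z^{(t,k)}-z^{(t)}\|_2\to 0$; since $\overline{\eta}^{(t,k)}\le\kappa\,\underline{\eta}^{(t,k)}$, along that subsequence $\eta_\beta^{(t,k)}(\beta^{(t)}-\beta^{(t,k)})\to 0$ and $\eta_\alpha^{(t,k)}(\alpha^{(t)}-\alpha^{(t,k)})\to 0$. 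The first-order conditions of the two prox subproblems state precisely that $v^{(t,k)}:=(\eta_\beta^{(t,k)}(\beta^{(t)}-\beta^{(t,k)}),\eta_\alpha^{(t,k)}(\alpha^{(t)}-\alpha^{(t,k)}))\in\nabla\Tilde{l}(z^{(t)})+\partial R(z^{(t,k)})$; passing to the limit along the subsequence and using $z^{(t,k)}\to z^{(t)}$, $R(z^{(t,k)})\to R(z^{(t)})$, $v^{(t,k)}\to 0$ and closedness of the graph of the limiting subdifferential \citep[Chapter 8]{rockafellar2009variational} yields $0\in\nabla\Tilde{l}(z^{(t)})+\partial R(z^{(t)})=\partial\Tilde{L}(z^{(t)})$, that is, $z^{(t)}$ is an l-stationary point of \eqref{problem:general-TRLS}, contradicting the hypothesis. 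In either case the assumption that the inner loop never terminates is impossible.

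The main obstacle is that, unlike in the linear model, $\nabla\Tilde{l}$ need not be Lipschitz, so the descent lemma is unavailable for the backtracking argument; the remedy is to first confine all trial points to a fixed compact set (the first step above) and then replace the descent lemma by the uniform-continuity estimate on that set. The weighted norm $\|\cdot\|_{\eta^{(t)}}$ is only bookkeeping, handled via $\underline{\eta}\le\eta_\beta^{(t,k)},\eta_\alpha^{(t,k)}$ and the constancy of $\overline{\eta}^{(t,k)}/\underline{\eta}^{(t,k)}$, exactly as in the proof of \citep[Lemma 3.1]{kanzow2022convergence}.
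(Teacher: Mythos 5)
Your proof is correct and follows essentially the same route the paper takes, namely adapting the backtracking well-definedness argument of Kanzow and Mehlitz (their Lemma 3.1) to the weighted norm: confine the trial points to a compact set, replace the descent lemma by a uniform-continuity estimate, and in the degenerate subsequence case read off l-stationarity of $(\beta^{(t)},\alpha^{(t)})$ from the prox optimality conditions and the closedness of the limiting subdifferential. The paper itself only cites this argument rather than writing it out, and your reconstruction fills in the details (including the constant ratio of $\eta_\beta^{(t)}$ to $\eta_\alpha^{(t)}$ under rescaling) correctly.
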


\begin{lemma}\label{lemma:difference-diminishing}
Let the assumptions of Theorem \ref{thm:subsequential-convergence} hold and $(\beta^{(t)},\alpha^{(t)})$ be not an l-stationary point of \eqref{problem:general-TRLS} for all $t$.
Then, $\|(\beta^{(t+1)},\alpha^{(t+1)})-(\beta^{(t)},\alpha^{(t)})\|_2\to0$ holds.
\end{lemma}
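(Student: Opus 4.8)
The plan is to turn the Armijo-type acceptance test of Algorithm~\ref{alg:PGM-nonlinear} into a summable sufficient-decrease estimate, in the same spirit as the linear case. First I would note that, under the standing hypothesis that no iterate $(\beta^{(t)},\alpha^{(t)})$ is an l-stationary point, Lemma~\ref{lemma:well-definedness} guarantees that the inner (linesearch) loop terminates in finitely many backtracking steps at every $t$; hence the acceptance criterion
\begin{equation}
\Tilde{L}(\beta^{(t+1)},\alpha^{(t+1)}) \le \Tilde{L}(\beta^{(t)},\alpha^{(t)}) - \frac{c_2}{2}\|(\beta^{(t+1)},\alpha^{(t+1)})-(\beta^{(t)},\alpha^{(t)})\|_{\eta^{(t)}}^2
\end{equation}
is valid for all $t$. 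Since $\eta_\beta^{(t)},\eta_\alpha^{(t)}\ge\underline{\eta}$, the weighted squared norm dominates $\underline{\eta}$ times the squared Euclidean norm, so
\begin{equation}
\Tilde{L}(\beta^{(t+1)},\alpha^{(t+1)}) \le \Tilde{L}(\beta^{(t)},\alpha^{(t)}) - \frac{c_2\underline{\eta}}{2}\|(\beta^{(t+1)},\alpha^{(t+1)})-(\beta^{(t)},\alpha^{(t)})\|_2^2 .
\end{equation}
In particular $\{\Tilde{L}(\beta^{(t)},\alpha^{(t)})\}$ is nonincreasing.

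Next I would use assumption~(ii): the quadratic term and $\tfrac12 T_h(\alpha)$ are nonnegative and $\lambda P(\beta)\ge\lambda\inf P>-\infty$, so $\Tilde{L}$ is bounded from below on $\mathbb{R}^p\times\mathbb{R}^n$. Summing the preceding inequality over $t=0,\ldots,T$ and telescoping gives
\begin{equation}
\frac{c_2\underline{\eta}}{2}\sum_{t=0}^{T}\|(\beta^{(t+1)},\alpha^{(t+1)})-(\beta^{(t)},\alpha^{(t)})\|_2^2 \le \Tilde{L}(\beta^{(0)},\alpha^{(0)}) - \Tilde{L}(\beta^{(T+1)},\alpha^{(T+1)}) \le \Tilde{L}(\beta^{(0)},\alpha^{(0)}) - \lambda\inf P .
\end{equation}
Letting $T\to\infty$ shows that $\sum_{t}\|(\beta^{(t+1)},\alpha^{(t+1)})-(\beta^{(t)},\alpha^{(t)})\|_2^2<\infty$, whence the summands tend to $0$, which is the claim.

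The argument is short because the real work sits in Lemma~\ref{lemma:well-definedness}: the continuity assumptions (iii)--(iv) on $P$ are what make the prox-subproblems \eqref{eq:cofficient-update-nonlinear}--\eqref{eq:absorber-update-nonlinear} well posed and the descent test eventually pass when $\eta$ is inflated, following the template of \citet{kanzow2022convergence}. The only point requiring care in the present lemma is that the linesearch starts from a finite objective value, i.e.\ that $(\beta^{(t)},\alpha^{(t)})\in\dom\Tilde{L}$ for every $t$; this holds once $\beta^{(0)}\in\dom P$ and is propagated because $\prox_{cP}$ maps into $\dom P$. Lower semicontinuity and continuity of $T_h$ and $\Tilde l$ are automatic, so no further regularity on $f_\beta$ or $P$ is needed for this step beyond what Lemma~\ref{lemma:well-definedness} already supplies.
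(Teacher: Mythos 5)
Your proof is correct and follows essentially the same route as the paper, which obtains this lemma by the telescoping sufficient-decrease argument of Proposition 3.1 in Kanzow--Mehlitz: the accepted linesearch test plus $\eta_\beta^{(t)},\eta_\alpha^{(t)}\ge\underline{\eta}$ gives a summable decrease, and boundedness of $\Tilde{L}$ from below (via assumption (ii) and nonnegativity of the other terms) forces the squared step lengths to be summable, hence to vanish. Your side remarks on finiteness of the linesearch (Lemma \ref{lemma:well-definedness}) and on iterates remaining in $\dom P$ are exactly the right points of care.
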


\begin{lemma}\label{lemma:subsequential-diminishing}
Suppose that the assumptions of Theorem \ref{thm:subsequential-convergence} hold and $(\beta^{(t)},\alpha^{(t)})$ is not an l-stationary point of \eqref{problem:general-TRLS} for all $t$.
Let $\{(\beta^{(t)},\alpha^{(t)})\}_{t\in T}$ be a subsequence converging to a point $(\beta^*,\alpha^*)$.
Then, $\eta_\beta^{(t)}\|\beta^{(t+1)}-\beta^{(t)}\|_2\to0$ and $\eta_\alpha^{(t)}\|\alpha^{(t+1)}-\alpha^{(t)}\|_2\to0$.
\end{lemma}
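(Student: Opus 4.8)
The plan is to argue by contradiction, isolating the genuinely new difficulty of the nonlinear setting: unlike the linear case (Lemma~\ref{lemma:boundedness-eta}), $\nabla\Tilde{l}$ need not be globally Lipschitz, so the inverse stepsizes are not uniformly bounded and the claim cannot be read off from Lemma~\ref{lemma:difference-diminishing} alone. Write $z^{(t)}\coloneqq(\beta^{(t)},\alpha^{(t)})$ and $R(\beta,\alpha)\coloneqq\frac12 T_h(\alpha)+\lambda P(\beta)$, and suppose the assertion fails. After passing to a subsequence I may assume, without loss of generality, that $\eta_\alpha^{(t)}\|\alpha^{(t+1)}-\alpha^{(t)}\|_2\ge\delta>0$ along an index set $T'\subseteq T$ (the $\beta$-case is symmetric). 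Since Lemma~\ref{lemma:difference-diminishing} gives $\|z^{(t+1)}-z^{(t)}\|_2\to0$, this forces $\eta_\alpha^{(t)}\to\infty$; because the inner loop scales both stepsizes by the common factor $c_1$ from initial values in $[\underline{\eta},\overline{\eta}]$, their ratio stays in $[\underline{\eta}/\overline{\eta},\overline{\eta}/\underline{\eta}]$, so $\eta_\beta^{(t)}\to\infty$ as well. Hence $\eta^{(t)}>\overline{\eta}$ eventually, the inner loop executes at least once on $T'$, and the penultimate trial point, computed with $\hat\eta^{(t)}\coloneqq\eta^{(t)}/c_1$, is rejected.

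Denote the penultimate trial by $\hat z^{(t+1)}$ and $\Delta^{(t)}\coloneqq\hat z^{(t+1)}-z^{(t)}$. First I would show $\Delta^{(t)}\to0$ on $T'$: the proximal optimality of $\hat z^{(t+1)}$, tested against $z^{(t)}$, gives $\frac12\|\Delta^{(t)}\|_{\hat\eta^{(t)}}^2\le-\innerprod{\nabla\Tilde{l}(z^{(t)})}{\Delta^{(t)}}+R(z^{(t)})-R(\hat z^{(t+1)})$, and since $T_h\ge0$ and $P$ is bounded below (assumption (ii)) while $R(z^{(t)})\le\Tilde{L}(z^{(0)})$ by monotonicity of $\Tilde{L}$ and $\Tilde{l}\ge0$, the right-hand side is at most $\|\nabla\Tilde{l}(z^{(t)})\|_2\|\Delta^{(t)}\|_2+\mathrm{const}$; as $\min\{\hat\eta_\beta^{(t)},\hat\eta_\alpha^{(t)}\}\to\infty$, the resulting quadratic inequality forces $\|\Delta^{(t)}\|_2\to0$. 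Next, combining the rejection inequality $\Tilde{L}(\hat z^{(t+1)})>\Tilde{L}(z^{(t)})-\frac{c_2}{2}\|\Delta^{(t)}\|_{\hat\eta^{(t)}}^2$ with the same proximal optimality cancels the $R$- and linear terms and yields $\frac{1-c_2}{2}\|\Delta^{(t)}\|_{\hat\eta^{(t)}}^2< \Tilde{l}(\hat z^{(t+1)})-\Tilde{l}(z^{(t)})-\innerprod{\nabla\Tilde{l}(z^{(t)})}{\Delta^{(t)}}$. Because $\Tilde{l}$ is $C^1$ (assumption (i)), the right-hand side equals $\int_0^1\innerprod{\nabla\Tilde{l}(z^{(t)}+s\Delta^{(t)})-\nabla\Tilde{l}(z^{(t)})}{\Delta^{(t)}}\,ds\le\varepsilon_t\|\Delta^{(t)}\|_2$ with $\varepsilon_t\coloneqq\sup_{s\in[0,1]}\|\nabla\Tilde{l}(z^{(t)}+s\Delta^{(t)})-\nabla\Tilde{l}(z^{(t)})\|_2$, and $\varepsilon_t\to0$ on $T'$ since $z^{(t)}\to(\beta^*,\alpha^*)$, $\Delta^{(t)}\to0$, and $\nabla\Tilde{l}$ is continuous. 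Using $\|\Delta^{(t)}\|_2\le\|\Delta^{(t)}\|_{\hat\eta^{(t)}}/\sqrt{\min\{\hat\eta_\beta^{(t)},\hat\eta_\alpha^{(t)}\}}$ and dividing out one factor of $\|\Delta^{(t)}\|_{\hat\eta^{(t)}}$ gives the decisive estimate $\hat\eta_\beta^{(t)}\|\hat\beta^{(t+1)}-\beta^{(t)}\|_2\to0$ and $\hat\eta_\alpha^{(t)}\|\hat\alpha^{(t+1)}-\alpha^{(t)}\|_2\to0$ on $T'$.

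It remains to transfer this from the penultimate trial to the accepted iterate $z^{(t+1)}$, and here the nonconvexity of $T_h$ (and of a general $P$) forbids the classical monotonicity of proximal step length in the stepsize. I would instead use a two-minimizer argument \emph{within each block}, which is legitimate because the subproblem separates over $\beta$ and $\alpha$. For the $\alpha$-block, $\alpha^{(t+1)}$ minimizes the block subproblem with weight $\eta_\alpha^{(t)}$ and $\hat\alpha^{(t+1)}$ the one with weight $\hat\eta_\alpha^{(t)}$; adding the two optimality inequalities cancels the identical linear and $T_h$-terms and leaves $\frac{\eta_\alpha^{(t)}-\hat\eta_\alpha^{(t)}}{2}\|\alpha^{(t+1)}-\alpha^{(t)}\|_2^2\le\frac{\eta_\alpha^{(t)}-\hat\eta_\alpha^{(t)}}{2}\|\hat\alpha^{(t+1)}-\alpha^{(t)}\|_2^2$, so $\eta_\alpha^{(t)}>\hat\eta_\alpha^{(t)}$ yields $\|\alpha^{(t+1)}-\alpha^{(t)}\|_2\le\|\hat\alpha^{(t+1)}-\alpha^{(t)}\|_2$ using only that both points are minimizers. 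Consequently $\eta_\alpha^{(t)}\|\alpha^{(t+1)}-\alpha^{(t)}\|_2\le c_1\hat\eta_\alpha^{(t)}\|\hat\alpha^{(t+1)}-\alpha^{(t)}\|_2\to0$ on $T'$, contradicting $\eta_\alpha^{(t)}\|\alpha^{(t+1)}-\alpha^{(t)}\|_2\ge\delta$; the identical bound for the $\beta$-block closes the argument. I expect the \emph{main obstacle} to be precisely this combination: extracting the extra $1/\sqrt{\min\hat\eta}$ gain from the rejection step (which requires first establishing $\Delta^{(t)}\to0$ so that continuity of $\nabla\Tilde{l}$ over a shrinking segment can be invoked, replacing the missing global Lipschitz constant), and then bridging from the penultimate trial to the accepted iterate without any convexity through the blockwise minimizer-sum inequality.
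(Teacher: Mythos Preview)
Your argument is correct and is precisely the kind of adaptation the paper has in mind: the paper does not give a self-contained proof of this lemma but simply notes that, thanks to the initial stepsizes lying in $[\underline{\eta},\overline{\eta}]$, the proof of Proposition~3.2 in \citet{kanzow2022convergence} carries over to the weighted-norm setting. What you have written is that proof spelled out in full, together with the necessary bookkeeping for the two separate stepsizes.

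A couple of remarks on the execution. Your key chain is exactly the standard one: (a) along $T'\subseteq T$ where the claim fails, $\eta^{(t)}\to\infty$ and hence the penultimate trial with $\hat\eta^{(t)}=\eta^{(t)}/c_1$ was rejected; (b) optimality tested at $z^{(t)}$ plus $\min\hat\eta^{(t)}\to\infty$ forces $\Delta^{(t)}\to0$; (c) combining rejection with optimality and the integral remainder of $\Tilde{l}$ gives $\frac{1-c_2}{2}\|\Delta^{(t)}\|_{\hat\eta^{(t)}}\le \varepsilon_t/\sqrt{\min\hat\eta^{(t)}}$ with $\varepsilon_t\to0$. The one place where your write-up goes slightly beyond a verbatim transcription of \citet{kanzow2022convergence} is the final step, where you bridge from the rejected trial $\hat z^{(t+1)}$ to the accepted iterate $z^{(t+1)}$ via the blockwise ``sum of two optimalities'' inequality
\[
\tfrac{\eta_\alpha^{(t)}-\hat\eta_\alpha^{(t)}}{2}\,\|\alpha^{(t+1)}-\alpha^{(t)}\|_2^2 \;\le\; \tfrac{\eta_\alpha^{(t)}-\hat\eta_\alpha^{(t)}}{2}\,\|\hat\alpha^{(t+1)}-\alpha^{(t)}\|_2^2,
\]
which is valid for arbitrary (possibly nonconvex) $T_h$ and $P$ because it uses only that both points are \emph{global} minimizers of their respective block subproblems. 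This is a clean and robust way to handle the fact that $\eta_\beta^{(t)}$ and $\eta_\alpha^{(t)}$ are controlled separately; the bounded ratio $\eta_\alpha^{(t)}/\eta_\beta^{(t)}\in[\underline{\eta}/\overline{\eta},\overline{\eta}/\underline{\eta}]$, which you correctly derive from the common scaling by $c_1$, is also used implicitly in the paper's remark that the weighted norm causes no trouble.
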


Using Lemmas \ref{lemma:well-definedness}--\ref{lemma:subsequential-diminishing}, we prove Theorem \ref{thm:subsequential-convergence}.

\begin{proof}[Proof of Theorem \ref{thm:subsequential-convergence}]
Suppose that $(\beta^{(t)},\alpha^{(t)})$ is not an l-stationary point of \eqref{problem:general-TRLS} for all $t$.
Let $(\beta^*,\alpha^*)$ be an accumulation point of $\{(\beta^{(t)},\alpha^{(t)})\}$ and $\{(\beta^{(t)},\alpha^{(t)})\}_{t\in T}$ be a subsequence converging to the accumulation point.
In view of Lemma \ref{lemma:difference-diminishing}, it holds that $(\beta^{(t+1)},\alpha^{(t+1)})\to_T(\beta^*,\alpha^*)$.
Since $(\beta^{(t+1)},\alpha^{(t+1)})$ is a minimizer of the subproblem, we have
\begin{equation}
    0 \in \nabla\Tilde{l}(\beta^{(t)},\alpha^{(t)}) + (\eta_\beta^{(t)}(\beta^{(t+1)}-\beta^{(t)}),\eta_\alpha^{(t)}(\alpha^{(t+1)}-\alpha^{(t)})) + \widehat{\partial} \Tilde{R}(\beta^{(t+1)},\alpha^{(t+1)}),
\end{equation}
where $\Tilde{R}(\beta,\alpha)\coloneqq\frac{1}{2}T_h(\alpha)+\lambda P(\beta)$.
From Lemma \ref{lemma:subsequential-diminishing} and the continuity of $\nabla\Tilde{l}$, taking the partial limit $t\to_T\infty$ yields
\begin{equation}
    0 \in \nabla\Tilde{l}(\beta^*,\alpha^*) + \partial\Tilde{R}(\beta^*,\alpha^*) = \partial\Tilde{L}(\beta^*,\alpha^*),
\end{equation}
which implies that $(\beta^*,\alpha^*)$ is a l-stationary point of \eqref{problem:general-TRLS}.
\end{proof}

\bibliography{reference.bib}

\begin{thebibliography}{19}
\providecommand{\natexlab}[1]{#1}
\providecommand{\url}[1]{\texttt{#1}}
\expandafter\ifx\csname urlstyle\endcsname\relax
  \providecommand{\doi}[1]{doi: #1}\else
  \providecommand{\doi}{doi: \begingroup \urlstyle{rm}\Url}\fi

\bibitem[Alfons et~al.(2013)Alfons, Croux, and Gelper]{alfons2013sparse}
Andreas Alfons, Christophe Croux, and Sarah Gelper.
\newblock Sparse least trimmed squares regression for analyzing high-dimensional large data sets.
\newblock \emph{The Annals of Applied Statistics}, pages 226--248, 2013.

\bibitem[Aravkin and Davis(2020)]{aravkin2020trimmed}
Aleksandr Aravkin and Damek Davis.
\newblock Trimmed statistical estimation via variance reduction.
\newblock \emph{Mathematics of Operations Research}, 45\penalty0 (1):\penalty0 292--322, 2020.

\bibitem[Attouch et~al.(2013)Attouch, Bolte, and Svaiter]{attouch2013convergence}
Hedy Attouch, J{\'e}r{\^o}me Bolte, and Benar~Fux Svaiter.
\newblock Convergence of descent methods for semi-algebraic and tame problems: proximal algorithms, forward--backward splitting, and regularized gauss--seidel methods.
\newblock \emph{Mathematical Programming}, 137\penalty0 (1):\penalty0 91--129, 2013.

\bibitem[Barzilai and Borwein(1988)]{barzilai1988two}
Jonathan Barzilai and Jonathan~M Borwein.
\newblock Two-point step size gradient methods.
\newblock \emph{IMA journal of numerical analysis}, 8\penalty0 (1):\penalty0 141--148, 1988.

\bibitem[Beaton and Tukey(1974)]{beaton1974fitting}
Albert~E Beaton and John~W Tukey.
\newblock The fitting of power series, meaning polynomials, illustrated on band-spectroscopic data.
\newblock \emph{Technometrics}, 16\penalty0 (2):\penalty0 147--185, 1974.

\bibitem[Beck(2017)]{beck2017first}
Amir Beck.
\newblock \emph{First-order methods in optimization}.
\newblock SIAM, 2017.

\bibitem[Delfour(2019)]{delfour2019introduction}
Michel~C Delfour.
\newblock \emph{Introduction to Optimization and Hadamard Semidifferential Calculus}.
\newblock SIAM, 2019.

\bibitem[Frankel et~al.(2015)Frankel, Garrigos, and Peypouquet]{frankel2015splitting}
Pierre Frankel, Guillaume Garrigos, and Juan Peypouquet.
\newblock Splitting methods with variable metric for kurdyka--{\l}ojasiewicz functions and general convergence rates.
\newblock \emph{Journal of Optimization Theory and Applications}, 165:\penalty0 874--900, 2015.

\bibitem[Hadi and Luce{\~n}o(1997)]{hadi1997maximum}
Ali~S Hadi and Alberto Luce{\~n}o.
\newblock Maximum trimmed likelihood estimators: a unified approach, examples, and algorithms.
\newblock \emph{Computational Statistics \& Data Analysis}, 25\penalty0 (3):\penalty0 251--272, 1997.

\bibitem[Huber(1964)]{huber1964robust}
Peter~J. Huber.
\newblock {Robust Estimation of a Location Parameter}.
\newblock \emph{The Annals of Mathematical Statistics}, 35\penalty0 (1):\penalty0 73--101, 1964.

\bibitem[Kanzow and Mehlitz(2022)]{kanzow2022convergence}
Christian Kanzow and Patrick Mehlitz.
\newblock Convergence properties of monotone and nonmonotone proximal gradient methods revisited.
\newblock \emph{Journal of Optimization Theory and Applications}, 195\penalty0 (2):\penalty0 624--646, 2022.

\bibitem[Li and Pong(2018)]{li2018calculus}
Guoyin Li and Ting~Kei Pong.
\newblock Calculus of the exponent of kurdyka--{\l}ojasiewicz inequality and its applications to linear convergence of first-order methods.
\newblock \emph{Foundations of computational mathematics}, 18\penalty0 (5):\penalty0 1199--1232, 2018.

\bibitem[Parikh and Boyd(2014)]{parikh2014proximal}
Neal Parikh and Stephen Boyd.
\newblock Proximal algorithms.
\newblock \emph{Foundations and trends{\textregistered} in Optimization}, 1\penalty0 (3):\penalty0 127--239, 2014.

\bibitem[Rockafellar and Wets(2009)]{rockafellar2009variational}
R~Tyrrell Rockafellar and Roger J-B Wets.
\newblock \emph{Variational analysis}, volume 317.
\newblock Springer Science \& Business Media, 2009.

\bibitem[Rousseeuw(1984)]{rousseeuw1984least}
Peter~J Rousseeuw.
\newblock Least median of squares regression.
\newblock \emph{Journal of the American statistical association}, 79\penalty0 (388):\penalty0 871--880, 1984.

\bibitem[Rousseeuw and Van~Driessen(2006)]{rousseeuw2006computing}
Peter~J Rousseeuw and Katrien Van~Driessen.
\newblock Computing lts regression for large data sets.
\newblock \emph{Data mining and knowledge discovery}, 12:\penalty0 29--45, 2006.

\bibitem[She and Owen(2011)]{she2011outlier}
Yiyuan She and Art~B Owen.
\newblock Outlier detection using nonconvex penalized regression.
\newblock \emph{Journal of the American Statistical Association}, 106\penalty0 (494):\penalty0 626--639, 2011.

\bibitem[Tibshirani(1996)]{tibshirani1996regression}
Robert Tibshirani.
\newblock Regression shrinkage and selection via the lasso.
\newblock \emph{Journal of the Royal Statistical Society: Series B (Methodological)}, 58\penalty0 (1):\penalty0 267--288, 1996.

\bibitem[Yagishita and Gotoh(2022)]{yagishita2022exact}
Shotaro Yagishita and Jun-ya Gotoh.
\newblock Exact penalization at d-stationary points of cardinality-or rank-constrained problem.
\newblock \emph{arXiv preprint arXiv:2209.02315}, 2022.

\end{thebibliography}
\bibliographystyle{plainnat}

\end{document}